\newtheorem{theorem}{Theorem}
\newtheorem{algorithm}{Algorithm}
\newtheorem{lemma}{Lemma}
\let\emptyset\varnothing
\DeclarePairedDelimiter\normV{\lVert}{\rVert}
\theoremstyle{remark}
\let\emptyset\varnothing
\theoremstyle{definition}
\begin{document}

\title{Distributed Resource Management in Downlink Cache-Enabled Multi-Cloud Radio Access Networks}

\author{\IEEEauthorblockN{Robert-Jeron Reifert, \IEEEmembership{Student Member, IEEE},
		Alaa Alameer Ahmad, \IEEEmembership{Member, IEEE},
		Hayssam Dahrouj, \IEEEmembership{Senior Member, IEEE},
		Anas Chaaban, \IEEEmembership{Senior Member, IEEE},
		Aydin Sezgin, \IEEEmembership{Senior Member, IEEE},
		Tareq Y. Al-Naffouri, \IEEEmembership{Senior Member, IEEE},  and
		Mohamed-Slim Alouini, \IEEEmembership{Fellow, IEEE}}
	\thanks{Part of this paper was presented at the IEEE International Conference on Communications Workshops, June 2020 \cite{ICC}. \newline
	\IEEEauthorblockA{Robert-Jeron Reifert, Alaa Alameer Ahmad, and Aydin Sezgin are with Digital Communication Systems, Ruhr-Universit\"at Bochum, Bochum, Germany. (Email: \{robert-.reifert,alaa.alameerahmad,aydin.sezgin\}@rub.de)}\newline
	Hayssam Dahrouj, Tareq Y. Al-Naffouri and Mohamed-Slim Alouini are with King Abdullah University of Science and Technology, Thuwal, Saudi Arabia. (Email: \{hayssam.dahrouj,tareq.alnaffouri,slim.alouini\}@kaust.edu.sa)\newline
	Anas Chaaban is with the School of Engineering, The University of British Columbia, Kelowna, Canada. (Email: anas.chaaban@ubc.ca)
	}	
}
\maketitle
\begin{abstract}
	In light of the premises of beyond fifth generation (B5G) networks, the need for better exploiting the capabilities of cloud-enabled networks arises, so as to cope with the large-scale interference resulting from the massive increase of data-hungry systems. A compound of several clouds, jointly managing inter-cloud and intra-cloud interference, constitutes a practical solution to account for the requirements of B5G networks. This paper considers a multi-cloud radio access network model (MC-RAN), where each cloud is connected to a distinct set of base stations (BSs) via limited capacity fronthaul links. The BSs are equipped with local cache storage and baseband processing capabilities, as a means to alleviate the fronthaul congestion problem. The paper then investigates the problem of jointly assigning users to clouds and determining their beamforming vectors so as to maximize the network-wide energy efficiency (EE) subject to fronthaul capacity and transmit power constraints. This paper solves such a mixed discrete-continuous, non-convex optimization problem using fractional programming (FP) and successive inner-convex approximation (SICA) techniques to deal with the non-convexity of the continuous part of the problem, and $l_0$-norm approximation to account for the binary association part. A highlight of the proposed algorithm is its capability of being implemented in a distributed fashion across the network's multiple clouds through a reasonable amount of information exchange. The numerical simulations illustrate the pronounced role the proposed algorithm plays in alleviating the interference of large-scale MC-RANs, especially in dense networks.	
\end{abstract}
\begin{IEEEkeywords}
	Multi-cloud radio access network, energy efficiency, successive inner convex approximation, fractional programming, Dinkelbach algorithm, distributed implementation.
\end{IEEEkeywords}
\IEEEpeerreviewmaketitle

\thispagestyle{empty}

\section{Introduction}
\subsection{Overview}
\vspace{-.1cm}
Beyond fifth generation (B5G) wireless communication networks are expected to enable ultra-connectivity through the empowerment of Internet of Things (IoT) systems \cite{8820755}. 
IoT systems introduce unprecedented amounts of data traffic, thanks to the tremendous increase in the number of efficient mobile communication devices such as smartphones and tablets, and the extreme popularity of content-provider social media platforms such as YouTube and Netflix \cite{8820755,7397856,8088528}. Video data traffic leads to an exponential increase in mobile data traffic. The video data usage is anticipated to increase from 63\% of the total data traffic of 38 exabytes (EB) per month in 2019 to 76\% of the total data traffic of 160 EB per month in 2025 \cite{erricson1}. While the data traffic exponentially increases and the requirements for modern communication systems introduce new challenges, restraining the network's total energy consumption is vital. In recent years, cloud radio access networks (C-RANs) have emerged as promising network architecture to accommodate the requirements of B5G wireless networks. In C-RAN, a set of geographically distributed base stations (BSs) are connected to a central processor (CP) at the cloud via high-speed digital fronthaul links, which helps managing the exacerbating large-scale wireless interference \cite{quek_peng_simeone_yu_2017}. In conjunction with C-RANs development, edge caching in wireless networks promises nowadays to be an efficient technique to reduce the backhaul congestion in the network, and consequently to minimize the content delivery time during peak-traffic communication \cite{6600983}. This is achieved by storing the popular content at the BSs closer to end-users. Such caching approach can further improve the content delivery rate and reduce the communication latency via alleviating the communication load on the fronthaul links, which represents the bottleneck in achieving high data-rates in C-RANs.

The majority of works on C-RAN consider a single-cloud scenario, where a single CP in the cloud is responsible for coordinating the operation of the well-spread multi-cell networks, containing a large number of BSs and users (see \cite{DaiY14,7880686} and references therein). However, the plurality and widespread of devices in next-generation systems would necessitate the deployment of multiple CPs, each responsible for managing a distinct set of BSs \cite{6799231,7841744,7086838,e22060668}. Each CP at the cloud coordinates the data processing and beamforming vectors of the set of BSs associated with it. Such coordination between CPs, however, needs not to exacerbate the communication backbones and is rather limited to message passing among the different clouds; hence the need to distributively manage their underlying infrastructures on a message-passing level, which this paper tackles in details. We refer to a C-RAN with multiple CPs as multi-cloud radio access network (MC-RAN) to distinguish it from the classical single CP C-RAN.
In MC-RAN, the \textit{inter-cloud} interference becomes an additional performance barrier metric, especially given the limited communication between the distributed CPs; thus the need to jointly managing both the inter-cloud and the intra-cloud interference. Even more pronounced than in single CP C-RAN, backhaul congestion is a major problem in MC-RAN, especially during high-traffic times, due to the accentuated number of constrainted cloud-BS connections. Edge caching in MC-RANs becomes, therefore, vital for tackling the congestion of the backhaul link and, consequently, for minimizing delivery delay times.

{In this paper, we consider a content-based MC-RAN, where each cloud coordinates the operation of a set of BSs. Each BS is equipped with a local memory that has a certain storage capacity to cache the most popular files. The paper adopts the problem of maximizing the energy efficiency (EE) metric, which is defined as the rate-to-power ratio, so as to strike a good trade-off between the achievable rate and transmit power. In the MC-RAN model addressed in this paper, the system performance becomes a function of the user-to-cloud association and caching strategies, as well as the beamforming vector of each user. The paper tackles such a challenging non-convex mixed-integer optimization problem through an efficient algorithm that can be implemented in a distributed fashion across the multiple CPs}.
\subsection{Related Works}
The MC-RAN resource management problem considered in this paper is related to recent works on wireless edge caching, C-RANs, and distributed resource allocation.
The C-RAN architecture has been studied by several recent works, e.g., \cite{8732995,9500732,9445019,weinberger2021synergistic,8815579}. All these works deal with the efficient design and optimization of the resource allocation strategy in the considered C-RAN system architecture. For instance, \cite{8732995} utilizes the concept of rate-splitting and common message decoding to enhance the network-wide sum-rate. Work \cite{9500732} additionally enhances the C-RAN's performance by exploiting the characteristics of reflecting surfaces. However, these works do not consider multiple CPs nor do they exploit the caching at edge capabilities.
Wireless edge caching has received substantial attention in research communities recently. The seminal paper \cite{6600983} highlights the benefits of caching in reducing the end-to-end transmission delay and alleviating the bottleneck of fronthaul congestion in wireless communication. In \cite{8291028}, the authors investigate both coded and uncoded caching strategies and analyze their impact on the EE of the system. %
Cooperative caching and optimizing transmission schemes jointly in small cell networks are studied in \cite{7880694}. 
Concerned with the EE in cache-enabled networks, the work \cite{8374907} considers scalable video coding-based random and fractional caching in a single user, under a multiple BSs setting. To reduce energy consumption, latency, and enhance the cache hits, work \cite{9137229} considers an information centric network under a central control caching scheme in a network with multiple autonomous systems.
In \cite{9291438}, the authors consider a multi-cell multi-antenna fog-RAN with multiple cache instances and a recommendation scheme, which further boosts cache hits. A complex mixed-timescale joint recommendation, caching, and beamforming design problem to decrease the content transmission delay is formulated and solved centrally.
The seminal work \cite{7488289} considers a cache-enabled single CP C-RAN. The authors in \cite{7488289} investigate the dynamic content-centric BS clustering and multicast beamforming design and formulate the problem of minimizing a weighted sum of fronthaul cost and transmit power under the quality-of-service (QoS) constraints. The authors in \cite{Yigit} study a coded-caching strategy in C-RAN and use semi-definite relaxations (SDP) to optimize the beamforming vectors from BSs to users. %
Other works have studied edge caching in C-RAN with different objectives. In \cite{8269405}, the authors study the impact of caching on balancing the outage probability against fronthaul usage in a single CP C-RAN. The paper \cite{8269405} suggests a caching strategy that jointly optimizes the cell average outage probability and fronthaul usage. The paper \cite{7558153} studies the joint design of cloud and edge processing, where the edge nodes, i.e., the BSs, are equipped with local caches.
All these works \cite{7488289,Yigit, 8269405,7558153}, however, consider a single CP C-RAN model. References \cite{7488289,Yigit, 8269405}, in particular, consider a conventional C-RAN model in which the CP is responsible for performing most tasks in the baseband processing protocol, while radio transmission is done by the BSs. Reference \cite{7558153}, on the other hand, illustrates the necessity of the existence of baseband processing capabilities at the BSs. Such capabilities are necessary when the BSs are equipped with a local cache that enables them to send the content directly to the end-user without the need for CP interaction; thereby reducing the usage of fronthaul capacity. Promising results show that edge computing architecture has strong merits for meeting B5G system requirements, mainly, those related to enhanced mobile broadband (eMBB), massive machine-type communications (mMTC), and ultra-reliable low latency communications (URLLC) services \cite{DBLP}. For example, in \cite{antonio}, the authors propose a computational cost model, which directly links the resource blocks reserved for a certain service with the computational capacity required for performing the processing tasks. Moreover, it is shown in \cite{antonio,3Gpp} that the required resource blocks mainly depend on the type of service requested by the users. In \cite{8403960}, the authors utilize power and subchannel allocation schemes as well as edge caching. Methods to optimize the resource allocation in mobile edge computing networks are proposed in \cite{9069928}. A hybrid resource allocation algorithm for mobile edge computing networks is proposed in \cite{8647539}, which shows the algorithm's suitability with future IoT applications.

Recently, multi-cloud systems are studied in references \cite{6799231,7841744}, under the assumption that each CP adopts a compression-based transmission strategy. In the current article, however, we focus on the \textit{data-sharing} strategy, which can achieve better performance in terms of sum-rate \cite{7809154}. In the same direction, references \cite{7086838,8027131} consider the user-to-CP association problem in a multi-cloud setup. While reference \cite{7086838} assumes fixed beamforming and an infinite fronthaul capacity, reference \cite{8027131} partially overcomes this issue by assuming a discrete set of fixed resources associated with each cluster of BSs connected to a specific CP.

In summary, most of the previous works focus on either a single-cloud architecture, or consider the edge caching problem or the local processing power, but not the connection of both facets. In this article, however, we consider the downlink of an MC-RAN in which the BSs are equipped with local caches and baseband processing capabilities. The performance of such a system is a function of the user-to-cloud association and the baseband functional split between the CPs and the local BSs. As the caches require processing power, additional energy consumption at the BSs has to be considered \cite{7032101}. We propose a transmission scheme where the content requested by each user can be served directly from the BS, if it is stored in the cache, or can be retrieved from the CP in case local processing is not affordable. To the best of our knowledge, this is the first work that investigates the connection between edge caching and functional split in MC-RAN.

\subsection{Contributions}
Unlike the aforementioned references, this paper focuses on an MC-RAN setup and considers the problem of jointly determining the user-to-cloud association and the users' beamforming vectors by maximizing the EE subject to exclusive local or global processing constraints, per-BS power constraints, and per-BS fronthaul constraints. To tackle such a difficult mixed discrete-continuous non-convex optimization problem, we propose a solution that is based on fractional programming and successive inner-convex approximations (SICA) framework to determine the continuous variables, and a $l_0$-norm heuristic approximation to cope with the discrete (binary) variables. {A highlight of the proposed algorithm is its ability to determine the user-to-cloud association and beamforming vectors in a distributed fashion across the multiple clouds, which makes it amenable to practical implementation}.
Unlike our conference version \cite{ICC} which excludes the edge caching capabilities and focuses on the sum-rate maximization problem, this paper rather considers a cache-enabled MC-RAN and tackles the EE objective, which strikes a trade-off between achieving a reasonably high sum-rate for a relatively low power consumption. We herein consider a practical system model in which multiple CPs are responsible to manage a dense set of BSs, each equipped with local cache storage and baseband processing capabilities, as a means to alleviate the congestion of the fronthaul links across the multiple clouds of the network. The major contributions of this paper can be summarized as follows:
\begin{itemize}
	\item[1)] {\textit{Hybrid Transmission Scheme:}} 
	In the studied system model, we propose a flexible functional split between the CPs at clouds and the BSs. That is, if a BS caches the requested content, the baseband processing functions can be performed either locally at the BS, bypassing the interaction with the CP and the corresponding load on the fronthaul links, or centrally at the CP. Each functional split option determines a trade-off between the computation and fronthaul communication costs and results with different EE values.
	\item[2)] \textit{Optimization Framework:} We formulate an EE maximization problem subject to exclusive local or global processing constraints, per-BS power constraints, and per-BS fronthaul constraints. We then develop a general solution to the formulated non-convex problem, by first solving for the user-to-cloud association, then relaxing the binary variables using $l_0$-norm approximation, and finally solving for the continuous non-convex optimization problem using Dinkelbach-transform \cite{8187084}, with a SICA framework.
	\item[3)] \textit{Numerical Simulations:} We perform extensive numerical simulations to evaluate the performance of distributed and centralized implementations of the proposed scheme. We also compare these implementations against state-of-the-art schemes. In particular, we investigate the impact of the system parameters, i.e., cache size, processing costs, fronthaul capacity, and number of users, on the EE of the considered MC-RAN. We also illustrate the convergence behavior of the proposed algorithm for different system parameters.
\end{itemize}
\subsection{Notations}
Throughout the paper, boldface lower-case and capital letters (e.g. $\mathbf{h}$, $\mathbf{H}$) denote vectors and matrices, respectively. Calligraphic letters (e.g. $\mathcal{H}$) represent sets. A column vector consisting of all the elements in set $\mathcal{H}$ is defined as $ {\rm vec}(\mathcal{H}) $. If $\mathcal{H} = \{h_1,\cdots,h_N\}$, then $ {\rm vec}(\mathcal{H}) \equiv [h_1,\cdots,h_N]^T $. If $\mathcal{H} = \{\mathbf{h}_1,\cdots,\mathbf{h}_N\}$, then $ {\rm vec}(\mathcal{H}) \equiv [\mathbf{h}_1^T,\cdots,\mathbf{h}_N^T]^T $. $\mathbf{0}_N$ is a vector of length $N$ with all elements set to zero. The real and complex field are noted as $\mathbb{R}$ and $\mathbb{C}$, respectively, while the real part of complex numbers is given by $\Re\{\cdot\}$. Finally $\left( \cdot \right)^{\dagger}$ denotes the hermitian transpose and $\left( \cdot \right)^{T}$ the transpose operator, also $\left|\cdot\right|$ is the absolute value and $\normV[\big]{\cdot}_{p}$ the $l_p$-norm. 
\section{System Model}


\subsection{Received Signal Model}
\begin{figure}
	\begin{center}
		\includegraphics[scale=1.3]{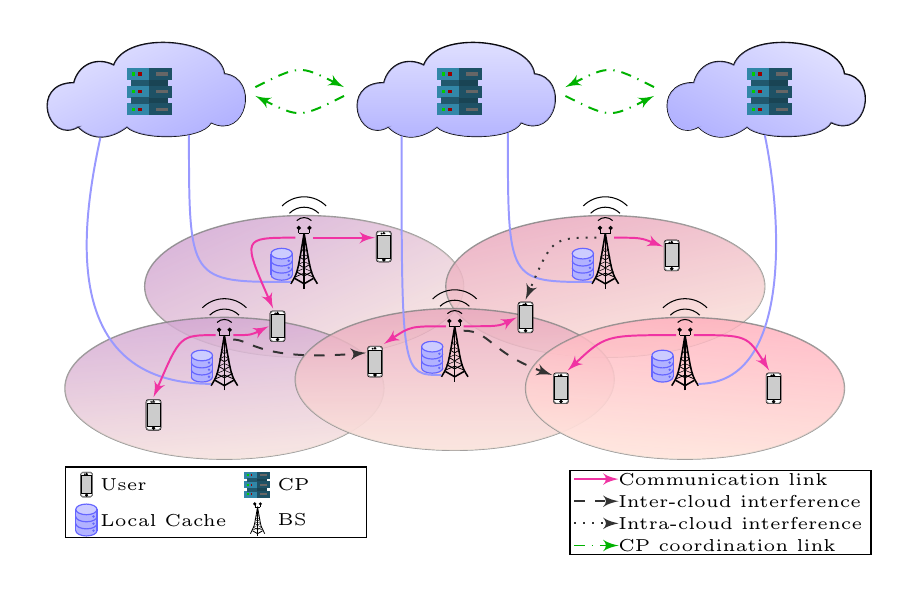} 
	\end{center}
	\caption{System model of an MC-RAN consisting of three clouds, $5$ BSs and $8$ users. Examples for inter-cloud and intra-cloud interference, as well as generic communication links are provided.} \label{sysmod}
	\vspace{-0.7cm}
\end{figure}
Consider the downlink of an MC-RAN 
 consisting of $C$ CPs, where each CP $c$ coordinates a set of BSs of size $B_c$. Each BS is assumed to have $L$ antennas, connected to one (and only one) CP via a digital fronthaul link with finite capacity, and equipped with a local cache memory of a total of $F_b \leq F$ local files, where $F$ denotes the total number of the library files. The network consists of $K$ single-antenna users. Fig.~\ref{sysmod} illustrates an example of the considered model, with an MC-RAN of $3$ clouds, and a total of $5$ BSs and $8$ users.

Let $\mathcal{C}=\{1,\cdots,C\}$ be the set of CPs, and $\mathcal{B}=\{1,\cdots,B\}$ be the set of BSs in the network, where $B = \sum_{c\in\mathcal{C}}B_c$. Furthermore, let $\mathcal{K}=\{1,\cdots,K\}$ be the set of users, and $\mathcal{F}=\{1,\cdots,F\}$ be the set of all files. We assume that each user $k\in \mathcal{K}$ can be assigned to one and only one CP $c \in \mathcal{C}$. Furthermore, we assume that every CP $c \in \mathcal{C}$ is connected to a cluster of BSs denoted by $\mathcal{B}_c = \{1,\cdots,B_c\}$. The networks clusters $\mathcal{B}_c$, $c\in\mathcal{C}$ are assumed to be disjoint, i.e., $\cup_{c \in \mathcal{C}} \mathcal{B}_c = \mathcal{B}$, $\mathcal{B}_c \cap \mathcal{B}_{c'}  = \emptyset, \forall c\neq c'$.

Let $\mathbf{h}_{c,b,k}\in {\mathbb C^{L}}$ be the channel vector from the $b$-th BS of the $c$-th cloud to the $k$-th user, and let ${\mathbf{h}}_{c, k}\in {\mathbb C}^{B_c L \times 1}$ be the aggregated channel vector from the $c$-th cloud to the $k$-th user. This can be expressed as ${\mathbf{h}}_{c,k}\triangleq [\mathbf{h}_{c,1,k}^{T},\cdots, \mathbf{h}_{c,B_c,k}^{T}]^T$. To simplify our discussion and make the problem mathematically tractable, we assume that each CP has access to the full channel state information (CSI), to the cached content of BSs in $\mathcal{B}_c$ and the demands (requested files) of all the users in the network. To deliver the requested files, we adopt a time-slotted block-based transmission model where each transmission block consists of several time slots. The channel fading coefficients remain constant within one block and may vary independently from one block to another. We focus on optimizing the EE of the cache-aided MC-RAN within one transmission block. Without loss of generality, we consider that the CPs divide each requested file into several data chunks, so that the transmission of each file may take place on several consecutive transmission blocks and the number of required transmission blocks to transmit each file may be different from other files.

\subsection{Cache Model}
In content delivery networks (CDN), edge caching is employed to bring the content closer to users. In general, we can distinguish between two phases in content delivery process to mobile users, namely \textit{cache placement} and \textit{cache delivery} phases \cite{8008769,6763007 }. Therefore, the recent works studying cache-aided wireless networks can be divided into two main categories: 1) optimizing the cached content delivery process for a given cache placement to get the best possible performance \cite{6336688}; 2) improve the content delivery process through efficient design of cache placement strategies \cite{8269405}. Essentially, in the cache placement phase, the popular content is stored at edge-network nodes, i.e., at BSs, with the sole purpose of improving the cache delivery phase, especially in peak-traffic times. Hence, the cache placement phase takes place over a much longer time-scale than that required in the cache delivery phase, since the popularity of the content changes much slower than the time required to deliver the requested content to the users.

In this article, we focus on the optimization of the cache delivery phase, while the cache placement phase is considered to be performed a priori. The cache content at BSs, the user's requests, and the whole library of files are assumed to be known at the clouds.
Let $\mathbf{C} \in \left\lbrace 0,1\right\rbrace^{F \times B} $ be the binary cache placement matrix where $\left[\mathbf{C}\right]_{f, b} = c_{f,b} $ is the element in the $f$-th row and the $b$-th column. Now, let $f_k \in \mathcal{F}$ be the requested file of user $k$, then $c_{f_k,b} = 1 $, a \textit{cache hit}, means that $f_k$ is cached at BS $b$ and $c_{f_k,b} = 0 $, a \textit{cache miss}, means it is not. Define the set of \textit{cache hit} users as $\mathcal{K}_1 \triangleq\left\lbrace k \in\mathcal{K}| \hspace{1mm} \exists (c, b)\in \mathcal{C} \times \mathcal{B}: c_{f_k, b} = 1 \right\rbrace $. 
Hence, the set $\mathcal{K}_1$ contains all users whose requested files are cached locally at the BSs. On the other hand, we define the set of \textit{cache-miss} users $\mathcal{K}_2$ as the set of users whose requested files are not cached the BSs, i.e., $\mathcal{K}_2 \triangleq\left\lbrace k \in\mathcal{K}| \hspace{1mm} \forall (c, b)\in \mathcal{C} \times \mathcal{B}: c_{f_k, b} = 0 \right\rbrace $. Note that in the special case where $\mathbf{C} = \mathbf{0}_{F\times B}$, no files are stored in BSs caches, i.e., $\mathcal{K}_1= \emptyset $ and $\mathcal{K}_2= \mathcal{K}$. On the other hand, when $\sum_{b \in \mathcal{B}}c_{f,b} \geq 1 \hspace{1mm} \forall f \in \mathcal{F}$, each file is cached at least at one BS, i.e., $\mathcal{K}_2= \emptyset $ and $\mathcal{K}_1= \mathcal{K}$.
\subsection{Baseband Processing and Fronthaul Communication Cost Models}

The majority of works on wireless caching consider the fronthaul communication and transmit costs, but ignore the computation cost required to process the requested file before transmitting it to the users. Unlike previous works, in this paper, we account for such factors while optimizing the delivery phase strategy such that the EE of the MC-RAN is maximized. The paper considers a computational cost model, in which the processing cost associated with each requested content is fixed and depends on the service requested by the user and the number of resource blocks served for delivering the requested content. We denote this cost as $\left\lbrace P_{k}^{\text{proc}}, \, \forall k \in \mathcal{K}\right\rbrace $.
Next, we describe different transmitting strategies considered in the cache-aided MC-RAN (i.e., at CPs and BSs). Each uses the processing resources, fronthaul links, and transmit resources differently.
\section{Transmit Scheme and Functional Split}
\subsection{Design Transmit Signals at the CP}
In this paper, we focus on the \textit{data-sharing} transmission strategy. In this strategy, the CP performs joint encoding of users' data. 
In more details, CP $c$ encodes $v_k$, the data chunk of the file $f_k$ requested by user $k$, into $s_k$. Here $s_k$ denotes the symbol of the encoded data at CP $c$ to be transmitted to user $k$ at the current time slot. We assume that $s_k$ is chosen independently from a complex Gaussian distribution with zero-mean and unit variance. After that, the CP forwards $s_k$, the encoded data chunks, through limited capacity fronthaul links to the cluster of BSs serving user $k$. The BSs then cooperate to transmit the signal to user $k$ using a joint beamforming vector. Although the beamforming vector coefficients are designed at the CP, the modulation and precoding tasks are performed at the BSs. We assume that the rate required to transmit beamforming vector coefficients over the fronthaul links is negligible compared to that required for transmitting the coded symbols of the users \cite{DaiY14}.
Let $\mathbf{w}_{c,k} \in \mathbb{C}^{B_c L \times 1} = \left[\mathbf{w}_{c,1,k}^{T},\ldots, \mathbf{w}_{c,B_c,k}^{T}\right]^T$ be the aggregate beamforming vector of user $k$ when associated with CP $c$.
Note that if BS $b \in \mathcal{B}_c$ is not in the BSs' cluster serving user $k$, then $\mathbf{w}_{c,b,k} = \mathbf{0}_L$, and the CP in this case does not share any data of user $k$ with BS $b$. Thus, the aggregate beamforming vector $\mathbf{w}_{c, k}$ is a group-sparse vector by construction.

\subsection{Design Transmit Signals Locally at BSs}
Caching the most popular files locally at the BSs overcomes the disadvantages of processing the data at the cloud and significantly reduces the load on the fronthaul links. Hence, the usage of the fronthaul link boils down to the exchange of essential control information between CPs and BSs (e.g., beamforming coefficients and scheduling information). However, despite the advantages of caching the content locally at BSs in terms of reducing latency and saving the fronthaul link bandwidth, this comes at the cost of increasing the processing cost at the BSs.
Hence, we consider that the BSs cache the uncoded data locally, and so encoding the data before transmission is done at the local processing unit at the BSs. That is, we assume the baseband processing tasks are done at the BSs serving user $k$, whenever the BSs cache the required file of user $k$. To this end, let $\mathbf{\tilde{w}}_{c, b, k} \in \mathbb{C}^{L \times 1}$ be the beamforming vector explicitly used at BS $b \in \mathcal{B}_c$ for user $k$ when the baseband processing tasks are performed at the BS. Let $\mathbf{\tilde{w}}_{c,k} \in \mathbb{C}^{B_c L \times 1} = \left[\mathbf{\tilde{w}}_{c,1,k}^{T},\ldots, \mathbf{\tilde{w}}_{c,B_c,k}^{T}\right]^T$ be the aggregate beamforming vector at BSs in cluster $\mathcal{B}_c$ which cache the requested file of user $k$. %
Note that the BS $b$ can encode the data locally and independently of the CP connected to it when it caches the requested file from user $k$. The control information needed for transmitting the signal is, however, assumed to be provided from the cloud. Hence, partial cooperation between CPs on the control level becomes possible.
\subsection{Hybrid Transmit Strategy}
This paper considers a hybrid transmit strategy, where each BS $b$ can serve a user $k$ (for each user $k \in\mathcal{K}$) according to one of three possibilities: 1) The BS can participate in transmitting data to $k$ following the CP processing strategy, i.e., based on the CP encoding, 2) Or, the BS processes the data locally, i.e., in caching scenarios, 3) Or, the BS does not transmit to user $k$. The transmit signal at BS $b$ from cluster $\mathcal{B}_c$, $\mathbf{x}_{c, b} \in \mathbb{C}^{L \times 1}$, can thus be written as follows
\begin{equation}\label{eq:BSsig}
\mathbf{x}_{c, b} = \sum_{k \in \mathcal{K}} \left( \mathbf{w}_{c, b, k} + \mathbf{\tilde{w}}_{c, b,k} \right)  s_k.
\end{equation}
The encoding process can be either done at the cloud or locally at the BS, but not at both at the same time. Also, the BS can perform the processing locally only in case it caches the requested file. Therefore, equation \eqref{eq:BSsig} is accompanied by the following two conditions:
\begin{align}
\mathds{1}\left\lbrace{\normV[\big]{\mathbf{w}_{c,b, k}}_{2}^{2}} \right\rbrace + \mathds{1}\left\lbrace{\normV[\big]{\mathbf{\tilde{w}}_{c, b, k}}_{2}^{2}} \right\rbrace &\leq 1,\label{eq:BME} &&\forall k \in \mathcal{K}, \forall b \in \mathcal{B}_c, \\
{\mathbf{\tilde{w}}_{c, b, k}} & = \mathbf{0}_L, &&\forall k \in \mathcal{K}_2, \forall b \in \mathcal{B}_c, \label{eq:BM}
\end{align}
where $\mathds{1}\left\lbrace{\cdot}\right\rbrace$ is the indicator function defined as:  $\mathds{1}\left\lbrace{x} \right\rbrace = 1$  if $x > 0$, and $0$ otherwise.
Equations \eqref{eq:BSsig}, \eqref{eq:BME}, and \eqref{eq:BM} can be interpreted as follows: if file $f_k$ is not cached at BS $b$ then BS $b \in \mathcal{B}_c$ transmits to user $k$ only if $\mathds{1}\left\lbrace{\normV[\big]{\mathbf{w}_{c,b, k}}_{2}^{2}} \right\rbrace = 1$, since in this case ${\mathbf{\tilde{w}}_{c, b, k}}  = \mathbf{0}_L$ . In case file $f_k$ is cached at BS $b$, the data would be encoded at the CP or locally at the BS depending on equation \eqref{eq:BME}. Moreover, by construction, if user $k$ is not associated with CP $c$ then $\mathbf{w}_{c,b, k} = \mathbf{0}, \forall b \in \mathcal{B}_c$. The specific design of beamforming vectors $\mathbf{w}_{c,b, k}$ or $\mathbf{\tilde{w}}_{c, b, k}$ in this case is based on solving our optimization problem, as discussed in details in section IV. After forming the transmit signal as in (\ref{eq:BSsig}), BS $b$ sends $\mathbf{x}_{c, b}$ subject to the following maximum transmit power constraint:
\begin{equation}
\mathbb{E}\left\lbrace\mathbf{{x}}_{c,b}^{\dagger}\mathbf{{x}}_{c,b} \right\rbrace \leq P_{b}^{\text{Max}}.
\end{equation}
\subsection{Achievable Rates and Fronthaul Constraints}
User $k$ can be served by one (and only one) BS-cluster connected to CP $c$ with an aggregate beamforming vector $\mathbf{w}_{c,k}$. Define the user-to-cloud association as a binary variable $z_{c,k}$, i.e., $z_{c,k} = 1$ if user $k$ is associated to cloud $c$, and $0$ otherwise. We further assume that each user can be associated to one and only one CP since, otherwise, a signal-level coordination would be required between the clouds, rather than a control-level coordination. 
We can now write the signal to interference plus noise ratio (SINR) of user $k$ when associated with CP $c$ as follows
\begin{equation}
\label{SINR_definitionH}
\mathrm{SINR}_{c,k} =\frac{\big|{\mathbf{h}}_{c, k}^\dagger\left( {\mathbf{w}_{c,k}}+{\mathbf{\tilde{w}}_{c,k}}\right) \big|^2}{\sum_{(c', k')\neq (c, k)}{\big|z_{c',k'}\,{\mathbf{h}}_{c', k}^\dagger\left( {\mathbf{w}_{c', k'}}+ {\mathbf{\tilde{w}}_{c', k'}}\right) }\big|^2 + \sigma^2} .
\end{equation}
Let $W$ be the bandwidth allocation of user $k$. The achievable rate of user $k$ associated to cloud $c$ becomes bounded as
\begin{equation}
R_{c, k} \leq W\log_2(1+\rm{SINR}_{c,k}).
\end{equation}
The transmit power per-BS can be expressed as
\begin{equation}\label{eq:transmitpowerH}
P_{b}\left(\mathbf{w}, \mathbf{\tilde{w}}\right) =  \frac{1}{\eta_b} \sum_{k \in \mathcal{K}} \left(\normV[\big]{\mathbf{w}_{c,b, k}}_{2}^{2} + \normV[\big]{\mathbf{\tilde{w}}_{c, b, k}}_{2}^{2}\right),
\end{equation}
where $\eta_b < 1 $ is the efficiency of the transmit amplifier at BS $b$.
The required fronthaul capacity at BS $b$ is given as
\begin{equation}\label{eq:Fronthl}
C_{b}\left(\mathbf{w}, \mathbf{\tilde{w}}\right) = \sum_{k \in \mathcal{K}}\left(\mathds{1}\left\lbrace{\normV[\big]{\mathbf{w}_{c,b, k}}_{2}^{2}} \right\rbrace +  (1-c_{f_k,b})\mathds{1}\left\lbrace{\normV[\big]{\mathbf{\tilde{w}}_{c, b, k}}_{2}^{2}} \right\rbrace \right) R_{c,k},
\end{equation}
where $\mathbf{w} \triangleq {\rm vec}(\left\lbrace \mathbf{w}_{c , k}| \forall (c, k) \in  \mathcal{C}\times \mathcal{K} \right\rbrace) $, $\mathbf{\tilde{w}} \triangleq {\rm vec}(\left\lbrace \mathbf{\tilde{w}}_{c , k}| \forall (c, k) \in  \mathcal{C}\times \mathcal{K}\right\rbrace) $. It is obvious from \eqref{eq:Fronthl} that if BS $b$ caches file $f_k$ requested by user $k$, i.e., $c_{f_k,b} = 1$, then user $k$ does not add to the burden of the fronthaul link of BS $b$ when $\mathbf{w}_{c,b, k} = \mathbf{0}_L$.
\subsection{Energy Efficiency at the Cloud}
In the context of our paper, the individual EE metric of each cloud $c$ is defined as the sum-rate of all users associated with $c$ divided by the power consumption required to serve these users. This work takes into account the transmit power, processing power, fronthaul power consumption, and operational fixed power consumption. The latter does not depend on the number of users and is defined as $P_c^{\text{Pr}}$. Such operational fixed power allocation include, but is not limited to, required cooling and circuitry power resources for the functionality of the C-RAN. Mathematically we define the EE at the cloud $c$ as follows%
%
\begin{equation}\label{ee}
	f_{\text{EE}}(c) \triangleq \frac{\sum\nolimits_{k \in \mathcal{K}}{R}_{c, k}}{P_c^{\text{Tx}}+g_{\text{EE}}(c)+P_c^{\text{Pr}}},
\end{equation}
where $P_c^{\text{Tx}}$ is the total transmit power consumed by the BSs of cluster $\mathcal{B}_c$ defined as
\begin{equation}
P_c^{\text{Tx}} = \sum_{b \in \mathcal{B}_c}P_{b}\left(\mathbf{w}, \mathbf{\tilde{w}}\right),
\end{equation}
where the processing power of the fronthaul, the BS, and the CP can be written as
\begin{equation}\label{hee}
	g_{\text{EE}}(c) = \underbrace{\sum\limits_{(b, k) \in \mathcal{B}_c \times \mathcal{K}} \mathds{1}\big\lbrace{\normV[\big]{\mathbf{{w}}_{c, b,k}}_{2}^{2}} \big\rbrace P_{b}^{\text{fthl}}}_{\text{Fronthaul processing cost}} + \underbrace{\sum\limits_{(b, k) \in \mathcal{B}_c \times \mathcal{K}} \mathds{1}\big\lbrace{\normV[\big]{\mathbf{\tilde{w}}_{c, b,k}}_{2}^{2}} \big\rbrace P_{k}^{\text{proc}}}_{\text{Processing at BS}} + \underbrace{\sum\limits_{k \in \mathcal{K}} \mathds{1}\big\lbrace{\normV[\big]{\mathbf{w}_{c,k}}_{2}^{2}} \big\rbrace P_{k}^{\text{proc}}}_{\text{Processing at CP}}.
\end{equation}
Interestingly, \eqref{ee} captures the trade-off between the local processing of cached files at the BSs and the fronthaul usage when the files are processed at the CP.
The next section tackles the paper's main optimization problem, which aims at maximizing the sum of EE of all the clouds, so as to determine the user-to-cloud assignment, the processing power decision variables, user-to-BS association, and the joint transmit beamforming vectors for all users across the network.
\section{MC-RANs EE Maximization and Algorithms}
This section first formulates the EE maximization problem as a mixed-integer optimization problem. To best tackle the intricacies of the problem at hand, the paper then presents some well-chosen mathematical reformulations, so as to derive an efficient iterative algorithm, the highlight of which is that it can be implemented in a distributed fashion across the network CPs.
\subsection{General Problem}
In the context of distributed EE across the MC-RAN, we seek to jointly optimize the functional split mode for each BS, the beamforming vectors and user-to-cloud association of all users in the network subject to per BS maximum transmit power and maximum fronthaul capacity constraints. The optimization problem under consideration can be mathematically written as:
\begin{subequations}\label{eq:Opt1}
	\begin{align}
	&\underset{\mathbf{w},\mathbf{\tilde{w}}, \mathbf{z},\mathbf{r}}{\text{maximize}}\quad \sum_{c \in \mathcal{C}} f_{\text{EE}}(c)  \label{eq:Obj} \\
	&\text{subject to} \quad \eqref{eq:BME},\eqref{eq:BM}, \nonumber\\
	& P_{b}\left(\mathbf{w}, \mathbf{\tilde{w}}\right)  \leq P_{b}^{\text{Max}} &&\forall b \in \mathcal{B}_c, \forall c \in \mathcal{C} \label{eq:powe},\\
	& C_{b}\left(\mathbf{w}, \mathbf{\tilde{w}}\right) \leq F_{b,c}  &&\forall b \in \mathcal{B}_c, \forall c \in \mathcal{C} \label{eq:FrontC},\\
	& \rm{SINR}_{c,k} \geq 2^{R_{c,k}/W} - 1  &&\forall k \in \mathcal{K}, \forall c \in \mathcal{C} \label{eq:SICn},\\
    & \sideset{}{_{c \in \mathcal{C}}}\sum z_{c,k} = 1 \quad &&\forall k\in \mathcal{K} \label{eq:Clds},\\	
    & z_{c,k} \in \left\lbrace 0, 1 \right\rbrace &&\forall k \in \mathcal{K}, \forall c \in \mathcal{C} \label{eq:bin1}, \\
    & \normV[\big]{\mathbf{w}_{c,b,k}}_{2}^{2} \leq M z_{c, k} &&\forall k \in \mathcal{K}, \forall b \in \mathcal{B}_c, \forall c \in \mathcal{C} \label{eq:BgM},\\
    & \normV[\big]{\mathbf{\tilde{w}}_{c,b,k}}_{2}^{2} \leq M z_{c, k} &&\forall k \in \mathcal{K}, \forall b \in \mathcal{B}_c, \forall c \in \mathcal{C} \label{eq:BgM2},\\
    & \sideset{}{_{k \in \mathcal{K}}}\sum z_{c,k} \leq K_c^{\text{Max}}  &&\forall c\in \mathcal{C}, \label{eq:Cap0}
 	\end{align}
\end{subequations}
where $\mathbf{z}  = {\rm vec}(\left\lbrace {z}_{c, k} | \forall \left(c, k\right) \in  \mathcal{C}\times \mathcal{K}\right\rbrace) $, $\mathbf{r}  = {\rm vec}(\left\lbrace {R}_{c, k} | \forall \left(c, k\right) \in  \mathcal{C}\times \mathcal{K}\right\rbrace)$, and $M$ is a large positive integer $M \in \mathbb{R}_{++}$ that is related to the big\textit{-M} constraint \cite{6514675}. $P_{b}^{\text{Max}}$ and $F_{b, c}$ are the maximum transmit power and the fronthaul capacity of BS $b$ in cloud $c$, respectively. $K_c^{\text{Max}}$ is the maximum number of users that can connect to cloud $c$. 
Constraint \eqref{eq:powe} represents the maximum transmit power available to BS $b$, and constraint \eqref{eq:FrontC} represents the available fronthaul capacity of BS $b$ connected to CP $c$. Constraint \eqref{eq:SICn} gives an upper bound on the maximum achievable rate of user $k$ when assigned to cloud $c$. Constraints \eqref{eq:Clds}-\eqref{eq:bin1} assure that each user can be associated with one and only one CP. Constraints \eqref{eq:BgM}-\eqref{eq:BgM2} represent the big\textit{-M} constraints and can be read as follows: if the CP $c$ is associated with user $k$, then constraints \eqref{eq:BgM}-\eqref{eq:BgM2} are deactivated \cite{6514675}. Otherwise, if $k$ is not associated with $c$, constraint \eqref{eq:BgM} forces the corresponding beamforming coefficients in $\mathbf{w}_{c,b,k}$ to zero, also \eqref{eq:BgM2} forces $\mathbf{\tilde{w}}_{c,b,k}$ to zero. The number of associated users to cloud $c$ does not exceed a given maximum number of users, which is ensured by \eqref{eq:Cap0}.


The above optimization \eqref{eq:Opt1} is over the binary variables $\mathbf{z}$, the continuous beamforming vectors $\mathbf{w}$ and $\mathbf{\tilde{w}}$, and the rates $\mathbf{r}$. Problem \eqref{eq:Opt1} is challenging to solve due to the non-convexity of the objective function and constraints \eqref{eq:FrontC}-\eqref{eq:bin1}, besides the discrete nature of variables $\mathbf{z}$.
\subsection{Overall Algorithmic Framework}
The optimization of the association variables and beamforming vectors in \eqref{eq:Opt1} is hard to tackle jointly and may be computationally prohibitive to solve globally. Therefore, our paper proposes adopting a two-step optimization approach. In the first step, we adopt an auxiliary, local utility function that represents the benefit of associating a user $k$ to cloud $c$. Then we formulate a generalized assignment problem to find the user-to-cloud association $\mathbf{z}$ for that given utility function. Afterwards, in the second stage, given the user-to-cloud assignment solution, we solve the optimization problem \eqref{eq:Opt1} using a $l_0$-norm relaxation followed by a successive inner-convex approximation approach to find the beamforming vectors $\mathbf{w}$ and $\mathbf{\tilde{w}}$, and the rates $\mathbf{r}$. We next present the generalized assignment formulation to solve the user-to-cloud association problem.
\subsection{User-to-Cloud Association}
Since problem \eqref{eq:Opt1} is too complicated to solve in full generality, we propose an ad-hoc solution to find the user-to-cloud association, which serves as a network planning step.
More precisely, let $\mathcal{U}(c,k)$ be an auxiliary, local utility function that measures the benefit of associating user $k$ with cloud $c$. A reasonable choice of $\mathcal{U}(c,k)$ is the following EE-like function:
\begin{equation} \label{utility}
\mathcal{U}(c,k) = \frac{R_{c, k}}{\sum_{b \in \mathcal{B}_c}\frac{1}{\eta_b} \left(\normV[\big]{\mathbf{w}_{c,b, k}}_{2}^{2} + \normV[\big]{\mathbf{\tilde{w}}_{c, b, k}}_{2}^{2}\right) + P_{k}^{\text{proc}}}.
\end{equation}
The intuition behind such choice is two-fold. Firstly, the utility function in \eqref{utility} defines the benefit of associating user $k$ with cloud $c$ as the ratio of the achievable rate for such an association and the processing and transmit power costs. Such a utility helps to mimic a reasonable, local EE expression of the MC-RAN, given that the utility function \eqref{utility} depends mainly on the aggregate beamforming vector from cloud $c$ to user $k$. Secondly, such choice helps to formulate a generalized assigned problem, which allows us to derive efficient algorithms to find the association variables $\mathbf{z}$; thereby alleviating the complexity of the solution of the complex problem \eqref{eq:Opt1}. The simulation results of the paper further validate the numerical gain of such heuristic approach. At this step, we fix the beamforming vectors from cloud $c$ to user $k$, e.g., $\mathbf{w}_{c,k}$ can have maximum ratio combining (MRC) structures defines as:
$\mathbf{w}_{c,k} \in \mathbb{C}^{B_c L\times 1} = \frac{\mathbf{h}_{c,k}}{\normV[]{\mathbf{h}_{c, k}}_{2}^2}, \quad \forall k \in \mathcal{K}$. Given the above utility $\mathcal{U}(c,k)$ and the fixed beamforming strategy, problem \eqref{eq:Opt1} boils down to a user-to-cloud association problem, which can be written as: \\
\begin{subequations}\label{eq:Opt2}
	\begin{align}
	&\underset{ \mathbf{z}}{\text{maximize}}\quad \sum_{(c, k) \in  \mathcal{C}\times \mathcal{K}} z_{c, k} \hspace{2mm}\mathcal{U}(c, k)  \label{eq:Obj1} \\
	&\text{subject to} \nonumber\\
	& \sideset{}{_{k \in \mathcal{K}}}\sum z_{c,k} \leq K_c^{\text{Max}}  &&\forall c\in \mathcal{C}, \label{eq:Cap1}\\
	& \sideset{}{_{c \in \mathcal{C}}}\sum z_{c,k} \leq 1 &&\forall k\in \mathcal{K}, \label{eq:Clds1}\\	
	& z_{c,k} \in \left\lbrace 0, 1 \right\rbrace &&\forall k\in \mathcal{K}, \forall c\in \mathcal{C}. \label{eq:bin2}
	\end{align}
\end{subequations}
Problem \eqref{eq:Opt2} follows a generalized assignment problem form \cite{6697043}, and is carried over the binary variables $\mathbf{z}$. While using global centralized optimization methods (e.g., the branch and cut algorithm) is possible, solving \eqref{eq:Opt2} in a distributed manner is rather adopted in the context of our MC-RAN setup, which is done using an auction-based iterative algorithm, similar to \cite{7086838} and \cite{6697043}, where only reasonable information exchange between the clouds is required. The algorithm guarantees convergence with a finite amount of iterations to a solution which is within a gap of $(1+\chi)$ to the global optimal solution of \eqref{eq:Opt2} \cite[Theorems 1 and 2]{6697043}, where $\chi \in [1,+\infty]$ is the approximation ratio of a subroutine knapsack algorithm.
\subsection{Problem Reformulation}
Fixing the user-to-cloud association solution $\mathbf{z}$ as described above, problem \eqref{eq:Opt1} can be reformulated as the following joint beamforming optimization problem:
\begin{subequations}\label{eq:Optbeam}
	\begin{align}
		&\underset{\mathbf{w}, \mathbf{\tilde{w}}, \mathbf{r}}{\text{maximize}}\quad \sum_{c \in \mathcal{C}} f_{\text{EE}}(c) \label{eq:Obj2_0} \\
		&\text{subject to} \quad \eqref{eq:BME},\eqref{eq:BM}, \eqref{eq:powe}, \eqref{eq:FrontC}, \eqref{eq:SICn}, \eqref{eq:BgM}, \eqref{eq:BgM2}. \nonumber
	\end{align}
\end{subequations}
While problem \eqref{eq:Optbeam} only runs over the continuous variables $\mathbf{w}$ and $\mathbf{\tilde{w}}$, and $\mathbf{r}$, it is still challenging due to the non-convexity of the objective function and feasible set. We next tackle such challenges by reformulating the non-convex constraints \eqref{eq:FrontC} and \eqref{eq:SICn}. We begin by reformulating problem \eqref{eq:Optbeam} as follows:
\begin{subequations}\label{eq:Opt3}
	\begin{align}
	&\underset{\mathbf{w}, \mathbf{\tilde{w}}, \bm {\gamma},\mathbf{r}}{\text{maximize}}\quad \sum_{c \in \mathcal{C}} f_{\text{EE}}(c) \label{eq:Obj2} \\
	&\text{subject to} \quad \eqref{eq:BME},\eqref{eq:BM}, \eqref{eq:powe}, \eqref{eq:FrontC}, \eqref{eq:BgM}, \eqref{eq:BgM2}, \nonumber\\
	& R_{c,k} \leq W\log_2\left(1+ \gamma_{c, k}\right) &&\forall k \in \mathcal{K}, \forall c \in \mathcal{C},  \label{eq:q1}\\
	& \sigma^2  + \sum_{(c', k')\neq (c, k)}{\big|{\mathbf{h}}_{c', k}^\dagger\left( {\mathbf{w}_{c', k'}}+ {\mathbf{\tilde{w}}_{c', k'}}\right) }\big|^2 - \frac{\big|{\mathbf{h}}_{c, k}^\dagger\left( {\mathbf{w}_{c,k}}+{\mathbf{\tilde{w}}_{c,k}}\right) \big|^2}{\gamma_{c, k}} \leq 0 \label{eq:q2} &&\forall k \in \mathcal{K}, \forall c \in \mathcal{C},
	\end{align}
\end{subequations}
where we introduce the variables $\bm \gamma = {\rm vec}(\left\lbrace \gamma_{c, k}| \forall \left(c, k\right) \in  \mathcal{C}\times \mathcal{K} \right\rbrace) $ to reformulate the maximum achievable rate constraint \eqref{eq:SICn} into constraints \eqref{eq:q1}-\eqref{eq:q2}. Constraint \eqref{eq:q2} is now in the form of difference of convex (DC) functions which can be tackled using an efficient SICA approach. Please note that the inter-cloud and intra-cloud interference terms (middle terms) in \eqref{eq:q2} do not contain the user-to-cloud association $z_{c',k'}$ anymore. Ensured by the  big-$M$ constraints \eqref{eq:BgM}-\eqref{eq:BgM2}, the beamforming vectors $\mathbf{w}_{c', k'}$ and $\mathbf{\tilde{w}}_{c', k'}$ are forced to zero if a user $k'$ is not associated with cloud $c'$. That is, the beamforming vectors implicitly include the, now fixed, user-to-cloud association variables. Therefor, the formulation \eqref{eq:q2} still holds.

Moreover, $\mathds{1}\big\lbrace{\normV[\big]{\mathbf{\tilde{w}}_{c, b,k}}_{2}^{2}} \big\rbrace$ indicates if the data of user $k$ is processed locally at BS $b$, $\mathds{1}\big\lbrace{\normV[\big]{\mathbf{{w}}_{c,k}}_{2}^{2}} \big\rbrace$ denotes if CP $c$ processes the data of user $k$, and $\mathds{1}\big\lbrace{\normV[\big]{\mathbf{{w}}_{c, b,k}}_{2}^{2}} \big\rbrace$ infers whether BS $b$ is in the serving cluster of user $k$ or not. These indicator functions present additional hurdles within the framework of the challenging problem \eqref{eq:Opt3}. We note that the benefit of using indicator functions is to determine the decision variables exclusively based on beamforming vectors. To deal with the challenging non-convex discrete indicator functions, we next use the $l_0$-norm relaxation technique.

First, we note that the indicator function in the objective \eqref{eq:Obj2} and the fronthaul constraint \eqref{eq:FrontC} can be equivalently expressed as an $l_0$-norm of the beamforming vectors. We can write $\mathds{1}\big\lbrace{\normV[\big]{\mathbf{\tilde{w}}_{c, b,k}}_{2}^{2}} \big\rbrace \triangleq \normV[\big]{\normV[\big]{\mathbf{\tilde{w}}_{c, b,k}}_{2}^{2}}_0 $, $\mathds{1}\big\lbrace{\normV[\big]{\mathbf{{w}}_{c, b,k}}_{2}^{2}} \big\rbrace \triangleq \normV[\big]{\normV[\big]{\mathbf{{w}}_{c, b,k}}_{2}^{2}}_0 $, and $\mathds{1}\big\lbrace{\normV[\big]{\mathbf{{w}}_{c,k}}_{2}^{2}} \big\rbrace \triangleq \normV[\big]{\normV[\big]{\mathbf{{w}}_{c,k}}_{2}^{2}}_0 $. This equivalence is of importance since the $l_0$-norm function can be approximated with a weighted $l_1$-norm convex function \cite{DaiY14}. To enable the use of such approximation in the context of our paper, we write the function $\normV[\big]{\normV[\big]{\mathbf{{w}}_{c, b,k}}_{2}^{2}}_0$ as a reweighed $l_1$-norm as follows:
\begin{equation}
\normV[\big]{\normV[\big]{\mathbf{{w}}_{c, b,k}}_{2}^{2}}_0 = \beta_{c,b,k} \normV[\big]{\mathbf{{w}}_{c, b,k}}_{2}^{2},
\end{equation}
\noindent where $\beta_{c,b,k}$ is a constant weight associated with BS $b$ in $\mathcal{B}_c$ and user $k$ and is defined in this work as
\begin{equation}\label{betaE}
\beta_{c,b,k} = \frac{1}{\delta + \normV[\big]{\mathbf{{w}}_{c, b,k}}_{2}^{2}},
\end{equation}
where $\delta > 0$ is a regularization constant\footnote{The regularization parameter can be chosen very small to make the approximation error arbitrary small. In the simulations we choose $\delta= 10^{-12}$ and we set the beamforming vector $\mathbf{w}_{c,b,k} = \mathbf{0}$ in iteration $t$ if $\normV[\big]{\mathbf{w}_{c,b, k}}_{2}^{2} \leq \delta$. This results in negligible error on the achievable data rate of user $k$.}. In a similar manner we define $\tilde{\beta}_{c,b,k}$ and $\beta_{c,k}$. When BS $b$ assigns low transmit power to user $k$, $\beta_{c,b,k}$ increases and thus user $k$ adds a non-desired burden to the fronthaul link and the energy consumption. The algorithm would then likely exclude $k$ from being served by $b$, thereby ensuring that the network only activates links with reasonable transmit powers.
Since the $l_1$-norm is applied to a quadratic function of the beamforming vectors, the resulting approximation is a smooth continuous function which is easier to optimize as compared to a non-smooth $l_0$-norm.
The weights in \eqref{betaE} are chosen in such a way that BSs with a small transmit power allocated to user $k$ get higher weights $\beta_{c,b,k}$ and eventually drop out of the cluster of BSs sharing the message of user $k$. Only those BSs which have reasonable transmit power allocated to user $k$ participate in the transmission to user $k$. The reformulated objective now reads as

\begin{equation}
	f_{2,\text{EE}}(c) \triangleq \frac{\sum\nolimits_{k \in \mathcal{K}} {R}_{c, k}}{P_c^{\text{Tx}} + p_{2,\text{EE}}(c) + P_c^{\text{Pr}}}, \label{eq:Obj4_}
\end{equation}
where
\begin{equation}
	p_{2,\text{EE}}(c) = \underbrace{\sum\limits_{(b, k) \in \mathcal{B}_c \times \mathcal{K}} \beta_{c,b,k}\normV[\big]{\mathbf{{w}}_{c, b,k}}_{2}^{2} P_{b}^{\text{fthl}}}_{\text{Fronthaul processing cost}} + \underbrace{\sum\limits_{(b, k) \in \mathcal{B}_c \times \mathcal{K}} \tilde{\beta}_{c,b,k}\normV[\big]{\mathbf{\tilde{w}}_{c, b,k}}_{2}^{2} P_{k}^{\text{proc}}}_{\text{Processing at BS}} + \underbrace{\sum\limits_{k \in \mathcal{K}} \beta_{c,k}\normV[\big]{\mathbf{w}_{c,k}}_{2}^{2} P_{k}^{\text{proc}}}_{\text{Processing at CP}}.
\end{equation}
Note that the function $p_{2,\text{EE}}(c)$ is a $l_0$-norm relaxed formulation of \eqref{hee}.
The reformulated problem \eqref{eq:Opt3} can now be written as
\begin{subequations}\label{eq:Opt4}
	\begin{align}
		&\underset{\mathbf{w}, \mathbf{\tilde{w}}, \bm {\gamma},\mathbf{r}}{\text{maximize}}\quad \sum_{c \in \mathcal{C}} f_{2,\text{EE}}(c) \label{eq:Obj4} \\
		&\text{subject to} \quad \eqref{eq:BM}, \eqref{eq:powe}, \eqref{eq:FrontC}, \eqref{eq:BgM}, \eqref{eq:BgM2}, \eqref{eq:q1}, \eqref{eq:q2}, \nonumber\\
		&\beta_{c,b,k}\normV[\big]{\mathbf{w}_{c,b, k}}_{2}^{2}  + \tilde{\beta}_{c,b,k}\normV[\big]{\mathbf{\tilde{w}}_{c, b, k}}_{2}^{2} \leq 1\label{eq:BME4} &&\forall k \in \mathcal{K}, \forall b \in \mathcal{B}_c, \forall c \in \mathcal{C},\\
		&C_{2,b}\left(\mathbf{w}\right) \leq F_{b,c} &&\forall b \in \mathcal{B}_c, \forall c \in \mathcal{C} \label{eq:FrontC4}.
	\end{align}
\end{subequations}
Note that constraints \eqref{eq:BME} and \eqref{eq:FrontC} are also replaced in \eqref{eq:Opt4} by \eqref{eq:BME4} and  \eqref{eq:FrontC4}, respectively.
Before reformulating the fronthaul capacity constraint \eqref{eq:FrontC}, we first elaborate on the second term in \eqref{eq:Fronthl}, namely $ (1-c_{f_k,b})\mathds{1}\left\lbrace{\normV[\big]{\mathbf{\tilde{w}}_{c, b, k}}_{2}^{2}}\right\rbrace $.
If a BS $b$ caches file $f_k$ then $c_{f_k,b} = 1$, which means the fronthaul link of BS $b$ is not affected by user $k$. Otherwise, if the BS $b$ does not cache the requested file by user $k$, i.e., $k$ is a cache-miss user from BS $b$ perspective, we know from previous definitions and \eqref{eq:BM} that $\tilde{\mathbf{w}}_{c,b,k}=\mathbf{0}_L$. Based on these observations, we can conclude that the second term in \eqref{eq:Fronthl} does not influence the fronthaul link and can thus be ignored. The reformulated fronthaul term is now
\begin{equation}
	C_{2,b}\left(\mathbf{w}\right) = \sum_{k \in \mathcal{K}}\beta_{c,b,k}\normV[\big]{\mathbf{w}_{c,b, k}}_{2}^{2} R_{c,k}.
\end{equation}
The above reformulations help overcoming the discrete nature of the original problem \eqref{eq:Opt3}. However, the reformulated problem \eqref{eq:Opt4} remains difficult, non-convex, and so it is tackled next using fractional programming and SICA.
\subsection{Fractional Programming and Successive Inner-Convex Approximations}
Note that the objective function in \eqref{eq:Obj4} is a sum of ratios of linear and convex functions, which makes \eqref{eq:Opt4} a suitable platform to apply a Dinkelbach-like algorithm \cite{8187084}\footnote{For a more general overview regarding fractional programming for EE maximization, especially a general formulation of Dinkelbach's algorithm, please refer to the work \cite{8187084}. A more detailed description of sequential optimization can also be found in \cite{7862919} and \cite{10.2307/169728}.}. Observe, however, that the non-convex feasible set of problem \eqref{eq:Opt4}, stemming from constraints \eqref{eq:FrontC4}, \eqref{eq:q1} and \eqref{eq:q2}, would hinder a direct application of a Dinkelbach-like algorithm to solve \eqref{eq:Opt4}, as this would require solving a non-convex problem to obtain a stationary solution, which is computationally prohibitive, especially when the network becomes reasonably sized \cite{7862919}. To overcome such difficulty, this paper uses an SICA approach so as to enable an efficient implementation of a Dinkelbach-like algorithm. 
A highlight of our proposed algorithm is that it can be implemented in a distributed fashion across the multiple CPs. 
We start by reformulating problem \eqref{eq:Opt4} to get a formulation that is amenable to apply SICA techniques.
\subsection{Convexification of Problem \eqref{eq:Opt4}}
First, we tackle \eqref{eq:FrontC4} by introducing ´the slack variables: $\mathbf{t}  = {\rm vec}(\left\lbrace {t}_{k, b} |\forall \left(k, b\right) \in  \mathcal{K}\times \mathcal{B}\right\rbrace)$, $\tilde{\mathbf{t}}  = {\rm vec}(\left\lbrace \tilde{t}_{k, b} | \forall \left(k, b\right) \in  \mathcal{K}\times \mathcal{B}\right\rbrace)$, and  $\mathbf{u}  = {\rm vec}(\left\lbrace {u}_{c, k} | \forall \left(c, k\right) \in \mathcal{C}\times\mathcal{K}\right\rbrace)$. 
Then, for all $k$, $b$, and $c$, define the following auxiliary constraints
\begin{align}
	&\beta_{c,b,k}\normV[\big]{\mathbf{w}_{c,b, k}}_{2}^{2} \leq t_{k,b} \label{eq:beta1},\\
	&\tilde{\beta}_{c,b,k}\normV[\big]{\mathbf{\tilde{w}}_{c, b, k}}_{2}^{2} \leq \tilde{t}_{k,b} \label{eq:beta2},\\
	&\sum_{b\in\mathcal{B}_c} \beta_{c,b,k}\normV[\big]{\mathbf{w}_{c,b, k}}_{2}^{2} \leq {u}_{c,k}. \label{eq:beta3}
\end{align}
The fronthaul capacity constraint \eqref{eq:FrontC4} is reformulated using the slack variable $\mathbf{t}$ as follows
\begin{equation}
	\sum_{k \in \mathcal{K}}t_{k,b} R_{c,k} \leq F_{b,c} \qquad \forall b \in \mathcal{B}_c, \forall c \in \mathcal{C}. \label{eq:FrontRe1}
\end{equation}
This function is non-convex as it is bilinear in the optimization variables. However, using some algebraic transformations of $\sum_{k \in \mathcal{K}}t_{k,b} R_{c,k}$, the above constraint can be equivalently written as

\begin{equation}
	\sum_{k\in\mathcal{K}}\frac{1}{4} \big(\underbrace{\left(t_{k,b}+R_{c,k}\right)^2}_{\text{convex}} - \underbrace{\left(t_{k,b} - R_{c,k}\right)^2}_{\text{convex}}\big) \leq F_{b,c}. \label{DC}
\end{equation}
The left hand side of \eqref{DC} is a difference of convex functions, which allows for applying SICA methods. The idea here is to find a convex surrogate upper-bound to the non-convex function associated with \eqref{DC}, i.e., $\sum_{k\in\mathcal{K}} t_{k,b} R_{c,k} - F_{b,c}$. This can be done by keeping the convex part and linearizing the concave one using first-order Taylor expansion. Define $g_1(\mathbf{t},\mathbf{r},\mathbf{t}',\mathbf{r}')$ as
\begin{equation}
	g_1(\mathbf{t},\mathbf{r},\mathbf{t}',\mathbf{r}') \triangleq \sum_{k \in \mathcal{K}}\left( \left(t_{k,b}+R_{c,k}\right)^2 - 2 \left(t'_{k,b}-R'_{c,k}\right)\left(t_{k,b}-R_{c,k}\right) + \left(t'_{k,b}-R'_{c,k}\right)^2\right) - 4 F_{b,c} \label{eq:FrontRe2}.
\end{equation}
Here $\mathbf{t}'  = {\rm vec}(\left\lbrace {t}'_{k, b} | \forall \left(k, b\right) \in  \mathcal{K}\times \mathcal{B}\right\rbrace)$ and $\mathbf{r}'  = {\rm vec}(\left\lbrace {R}'_{c, k} | \forall \left(c, k\right) \in  \mathcal{C}\times \mathcal{K}\right\rbrace)$ are feasible fixed values, which satisfy the pre-defined constraints \eqref{eq:beta1} and \eqref{eq:FrontRe1}. Such feasible fixed values would eventually be updated iteratively, so as to refine the feasible set at every iteration of the SICA.%
\begin{lemma} \label{lma1}
	For all feasible values $\left(\mathbf{t}',\mathbf{r}'\right)$ and all $(c,b)\in(\mathcal{C},\mathcal{B}_c)$, the function $g_1(\mathbf{t},\mathbf{r},\mathbf{t}',\mathbf{r}')$ satisfies
	\begin{equation} \label{eq:lma11}
		g_1(\mathbf{t},\mathbf{r},\mathbf{t}',\mathbf{r}') \geq \sum_{k\in\mathcal{K}} t_{k,b} R_{c,k} - F_{b,c}.
	\end{equation}
\end{lemma}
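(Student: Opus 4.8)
The plan is to recognize \eqref{eq:FrontRe2} as the convex surrogate produced by one linearization step of the difference-of-convex reformulation \eqref{DC}, and to prove \eqref{eq:lma11} directly from the polarization identity together with the gradient inequality for a convex quadratic. First I would rewrite the bilinear term as a difference of convex quadratics via $ab=\tfrac14\big((a+b)^2-(a-b)^2\big)$ with $a=t_{k,b}$ and $b=R_{c,k}$ — exactly the step already used to obtain \eqref{DC}. Thus $\sum_{k}t_{k,b}R_{c,k}-F_{b,c}$ equals the convex part $\tfrac14\sum_k(t_{k,b}+R_{c,k})^2-F_{b,c}$ minus the convex term $\tfrac14\sum_k(t_{k,b}-R_{c,k})^2$, so the only obstruction to an upper bound is this last, subtracted term.

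Second, I would bound that convex quadratic from below by its first-order Taylor expansion at the fixed point $(t'_{k,b},R'_{c,k})$. Writing $q(t,R)=(t-R)^2$, convexity and differentiability give $q(t_{k,b},R_{c,k})\ge q(t'_{k,b},R'_{c,k})+\nabla q(t'_{k,b},R'_{c,k})^{T}\big(t_{k,b}-t'_{k,b},\,R_{c,k}-R'_{c,k}\big)$, which after expanding $\nabla q=(2(t-R),-2(t-R))$ and cancelling simplifies to $(t_{k,b}-R_{c,k})^2\ge 2(t'_{k,b}-R'_{c,k})(t_{k,b}-R_{c,k})-(t'_{k,b}-R'_{c,k})^2$. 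Negating this majorization and substituting it for the concave $-(t_{k,b}-R_{c,k})^2$ in the difference-of-convex expression, then summing over $k\in\mathcal{K}$ and collecting constants, reproduces exactly $g_1(\mathbf t,\mathbf r,\mathbf t',\mathbf r')$ (up to the overall scaling by $4$ between \eqref{DC} and \eqref{eq:FrontRe2}), which gives \eqref{eq:lma11}. Equivalently, the surrogate constraint $g_1\le 0$ then implies the original fronthaul constraint \eqref{eq:FrontRe1}.

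There is no genuinely hard step: the argument is the polarization identity, one use of convexity, and a summation. The only places requiring care are mechanical — keeping track of the factor $4$ (and the matching $4F_{b,c}$) when passing between \eqref{DC} and \eqref{eq:FrontRe2}, and the sign flip when the concave term is moved to the other side, since reversing either would invert the inequality. I would sanity-check the algebra by evaluating both sides at the reference point $(\mathbf t',\mathbf r')=(\mathbf t,\mathbf r)$, where the Taylor bound is tight and the surrogate touches the original function. The feasibility requirement placed on $(\mathbf t',\mathbf r')$ (that it satisfy \eqref{eq:beta1} and \eqref{eq:FrontRe1}) is not needed for the inequality itself, which holds for any linearization point; it is imposed only so that the resulting convexified feasible set is a nonempty inner approximation used by the SICA iterations.
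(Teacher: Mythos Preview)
Your proposal is correct and follows essentially the same approach as the paper: both arguments rewrite the bilinear term via the polarization identity \eqref{DC}, keep the convex part, and replace the subtracted convex quadratic $(t_{k,b}-R_{c,k})^2$ by its affine minorant at $(t'_{k,b},R'_{c,k})$, yielding the surrogate $g_1$. Your observation that the feasibility of $(\mathbf{t}',\mathbf{r}')$ is not needed for the inequality itself, and your remark about the overall factor~$4$ between \eqref{DC} and \eqref{eq:FrontRe2}, are both well taken and are points the paper's proof leaves implicit.
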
%
\begin{proof}
	Please refer to Appendix \ref{app1}.
\end{proof}%
The above step allows us to convexify the fronthaul capacity constraint \eqref{eq:FrontC4}, by means of finding the convex surrogate upper-bound function $g_1(\mathbf{t},\mathbf{r},\mathbf{t}',\mathbf{r}')$. We next apply a similar procedure to the non-convex constraint \eqref{eq:q1}. To this end, based on \eqref{eq:q1}, we define ${\hat{g}}_2({\bm \gamma},\mathbf{r})$ as
\begin{equation}
	{\hat{g}}_2({\bm \gamma},\mathbf{r}) =  R_{c,k} - W\log_2\left(1+ \gamma_{c, k}\right) \leq 0. \label{g2}
\end{equation}
The function ${\hat{g}}_2({\bm \gamma},\mathbf{r})$ in \eqref{g2} is non-convex in $\gamma_{c, k}$. One can, however, linearize the non-convex part of ${\hat{g}}_2({\bm \gamma},\mathbf{r})$, namely $\log_2\left(1+ \gamma_{c, k}\right)$, around $\bm {\gamma}$ using the first-order Taylor expansion. The convex upper-bound of ${\hat{g}}_2({\bm \gamma},\mathbf{r})$, denoted by $g_2({\bm \gamma},\mathbf{r},{\bm \gamma}')$ can be then written as
\begin{equation}
	g_2({\bm \gamma},\mathbf{r},{\bm \gamma}')\triangleq \frac{R_{c,k}}{W} - \log_2\left(1+ \gamma'_{c, k}\right) - \frac{1}{\ln(2)\left(1+ \gamma'_{c, k}\right)} \left(\gamma_{c, k} - \gamma'_{c, k}\right), \label{eq:q1Re}
\end{equation}
\noindent where the variables ${\bm \gamma}'  = {\rm vec}(\left\lbrace {\gamma}'_{c, k} | \forall \left(c, k\right) \in  \mathcal{C}\times \mathcal{K}\right\rbrace)$ are feasible fixed values, which allows to convexify \eqref{eq:q1}.

Consider now constraint \eqref{eq:q2}, which can be re-written as
\begin{equation}
	\zeta^+(\mathbf{{w}},\mathbf{\tilde{w}}) - \zeta^-(\mathbf{{w}},\mathbf{\tilde{w}},{\bm \gamma}) \leq 0, \label{eq:h}
\end{equation}
where
\begin{equation}
	\zeta^+(\mathbf{{w}},\mathbf{\tilde{w}}) = \sigma^2  + \sum_{(c', k')\neq (c, k)}{\big|{\mathbf{h}}_{c', k}^\dagger\left( {\mathbf{w}_{c', k'}}+ {\mathbf{\tilde{w}}_{c', k'}}\right) }\big|^2 ,
\end{equation}
and
\begin{equation}
	\zeta^-(\mathbf{{w}},\mathbf{\tilde{w}},{\bm \gamma}) = \frac{\big|{\mathbf{h}}_{c, k}^\dagger\left( {\mathbf{w}_{c,k}}+{\mathbf{\tilde{w}}_{c,k}}\right) \big|^2}{\gamma_{c, k}}.
\end{equation}
The formulation \eqref{eq:h} is in DC form, since $\zeta^+$ is a convex, quadratic function in $\mathbf{{w}}$ and $\mathbf{\tilde{w}}$. $\zeta^-$ is also convex since it is a rational function with quadratic numerator and positive linear denominator \cite{convexOpt}. Lemma \ref{lma2} below states a viable first-order approximation of such function $\zeta^-$.
\begin{lemma} \label{lma2}
	Define $\zeta(\mathbf{x},\xi)$, where $\mathbf{x}\in\mathbb{C}^\text{P}$ and $\xi > 0$, as $\zeta(\mathbf{x},\xi)=\frac{|\mathbf{x}|^2}{\xi}$, then the first-order approximation of $\zeta(\mathbf{x},\xi)$ around a feasible point $(\mathbf{x}',\xi')$ satisfies
	\begin{equation}
		\zeta(\mathbf{x},\xi) \geq \tilde{\zeta}(\mathbf{x},\xi,\mathbf{x}',\xi') = \frac{2\Re\big\{ \left(\mathbf{x}'\right)^\dagger \mathbf{x} \big\}}{\xi'} - \frac{\xi}{\left(\xi'\right)^2}\big|\mathbf{x}'\big|^2. \label{eq:lma1}
	\end{equation}
\end{lemma}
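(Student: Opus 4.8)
Since $\zeta(\mathbf{x},\xi)=|\mathbf{x}|^2/\xi$ is the quadratic‑over‑linear (matrix‑fractional) function, it is jointly convex on $\mathbb{C}^{P}\times\mathbb{R}_{++}$, and a differentiable convex function never falls below any of its first‑order Taylor expansions. So the cleanest plan is: (i) verify that $\tilde\zeta$ in \eqref{eq:lma1} is exactly the first‑order Taylor expansion of $\zeta$ about the feasible point $(\mathbf{x}',\xi')$, and then (ii) invoke joint convexity. For step (i) I would separate real and imaginary parts, writing $\mathbf{x}=\mathbf{x}_{\mathrm R}+\jmath\,\mathbf{x}_{\mathrm I}$ with $\jmath^2=-1$ and $\mathbf{x}_{\mathrm R},\mathbf{x}_{\mathrm I}$ real, so that $|\mathbf{x}|^2=|\mathbf{x}_{\mathrm R}|^2+|\mathbf{x}_{\mathrm I}|^2$ and $\zeta$ is a sum of scalar terms $(\cdot)^2/\xi$, each convex, which gives joint convexity by the standard argument \cite{convexOpt}. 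Differentiating at $(\mathbf{x}',\xi')$ yields gradient blocks $2\mathbf{x}'_{\mathrm R}/\xi'$, $2\mathbf{x}'_{\mathrm I}/\xi'$ and $-|\mathbf{x}'|^2/(\xi')^2$; assembling the linearization and using $(\mathbf{x}'_{\mathrm R})^{T}\mathbf{x}_{\mathrm R}+(\mathbf{x}'_{\mathrm I})^{T}\mathbf{x}_{\mathrm I}=\Re\{(\mathbf{x}')^{\dagger}\mathbf{x}\}$ makes the three $\xi$‑ and $\mathbf{x}$‑independent terms cancel, leaving precisely the right‑hand side of \eqref{eq:lma1}.

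An alternative, self‑contained route — which I would actually prefer to present because it sidesteps convexity machinery — is to subtract $\tilde\zeta$ from $\zeta$ directly, reduce over the common positive denominator $\xi(\xi')^2$, and recognize the numerator as a perfect square:
\begin{equation}
	\zeta(\mathbf{x},\xi)-\tilde\zeta(\mathbf{x},\xi,\mathbf{x}',\xi')
	=\frac{(\xi')^2|\mathbf{x}|^2-2\,\xi\xi'\,\Re\{(\mathbf{x}')^{\dagger}\mathbf{x}\}+\xi^2|\mathbf{x}'|^2}{\xi\,(\xi')^2}
	=\frac{\big|\,\xi'\mathbf{x}-\xi\mathbf{x}'\,\big|^2}{\xi\,(\xi')^2}\;\geq\;0,
\end{equation}
where the middle equality uses $(\mathbf{x}')^{\dagger}\mathbf{x}+\mathbf{x}^{\dagger}\mathbf{x}'=2\Re\{(\mathbf{x}')^{\dagger}\mathbf{x}\}$ together with $\xi,\xi'\in\mathbb{R}$, and non‑negativity holds because $\xi>0$ and $\xi'>0$. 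As a by‑product this also re‑establishes the joint convexity of $\zeta$ (letting $(\mathbf{x}',\xi')$ range over the domain), and it pins down $\tilde\zeta$ as the tight supporting minorant at $(\mathbf{x}',\xi')$, where equality is attained.

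There is no deep obstacle here; the only care needed is bookkeeping — checking that the constant terms of the Taylor expansion cancel so that the surviving expression is \eqref{eq:lma1} itself and not merely proportional to it, and flagging where positivity of $\xi,\xi'$ is used (both for differentiability of $\zeta$ at the expansion point and for the sign of the denominator in the direct argument). I would close by noting the intended use: Lemma~\ref{lma2} is applied with $\mathbf{x}={\mathbf{h}}_{c,k}^{\dagger}(\mathbf{w}_{c,k}+\tilde{\mathbf{w}}_{c,k})$ and $\xi=\gamma_{c,k}$ to replace $\zeta^{-}$ in \eqref{eq:h} by its concave minorant $\tilde\zeta$, which, combined with Lemma~\ref{lma1} and \eqref{eq:q1Re}, convexifies problem \eqref{eq:Opt4} for the inner iteration of the SICA procedure.
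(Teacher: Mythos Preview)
Your proposal is correct. Your first route --- compute the gradients of $\zeta$ at $(\mathbf{x}',\xi')$, assemble the first-order Taylor expansion, and watch the constant terms cancel to leave exactly the right-hand side of \eqref{eq:lma1} --- is essentially what the paper does in Appendix~\ref{app2}, though the paper carries out the simplification in complex notation without splitting into real and imaginary parts, and it stops once the Taylor expression is obtained, leaving the inequality $\zeta\geq\tilde\zeta$ implicit via the (unstated) convexity of the quadratic-over-linear function. In that sense your write-up is actually more complete than the paper's, since you make the convexity step explicit.

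Your second, perfect-square route is genuinely different: reducing $\zeta-\tilde\zeta$ to $|\xi'\mathbf{x}-\xi\mathbf{x}'|^2/(\xi(\xi')^2)$ proves the inequality and the tightness at $(\mathbf{x}',\xi')$ in one line, with no appeal to convexity or differentiability at all. This buys a fully self-contained argument and simultaneously re-derives joint convexity as a corollary; the paper's approach, by contrast, is shorter to state if one is willing to take convexity of $|\mathbf{x}|^2/\xi$ as known. Either is fine for the intended SICA application you describe.
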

\begin{proof}
	Please refer to Appendix \ref{app2}.
\end{proof}%
In order to obtain a convex reformulation of \eqref{eq:h}, we use the first order-approximaiton of $\zeta^-$ around the feasible point $(\mathbf{{w}}',\mathbf{\tilde{w}}',\bm {\gamma}')$ according to Lemma~\ref{lma2} to get:
\begin{align}
	\frac{\big|{\mathbf{h}}_{c, k}^\dagger\left( {\mathbf{w}_{c,k}}+{\mathbf{\tilde{w}}_{c,k}}\right) \big|^2}{\gamma_{c, k}} \approx \frac{2}{\gamma'_{c, k}}\Re\left\{ \left( {\mathbf{w}'_{c,k}}+{\mathbf{\tilde{w}}'_{c,k}}\right)^\dagger {\mathbf{h}}_{c, k} {\mathbf{h}}_{c, k}^\dagger \left( {\mathbf{w}_{c,k}}+{\mathbf{\tilde{w}}_{c,k}}\right) \right\} \nonumber\\
	- \frac{\gamma_{c, k}}{\left(\gamma'_{c, k}\right)^2} \big|{\mathbf{h}}_{c, k}^\dagger\left( {\mathbf{w}'_{c,k}}+{\mathbf{\tilde{w}}'_{c,k}}\right) \big|^2. \label{eq:q2Re} 
\end{align}
By substituting the linearized form above into the non-convex formulation \eqref{eq:h}, the inner convex approximation $g_3(\mathbf{w},\tilde{\mathbf{w}},{\bm \gamma},\mathbf{w}',\tilde{\mathbf{w}}',{\bm \gamma}')$ of $\zeta^+(\mathbf{{w}},\mathbf{\tilde{w}}) - \zeta^-(\mathbf{{w}},\mathbf{\tilde{w}},{\bm \gamma})$ can be written as:
\begin{align}
	g_3(&\mathbf{w},\tilde{\mathbf{w}},{\bm \gamma},\mathbf{w}',\tilde{\mathbf{w}}',{\bm \gamma}')\triangleq\sigma^2  + \sum_{(c', k')\neq (c, k)}{\big|{\mathbf{h}}_{c', k}^\dagger\left( {\mathbf{w}_{c', k'}}+ {\mathbf{\tilde{w}}_{c', k'}}\right) }\big|^2  \label{eq:g3_}\\
	&- \sum_{c\in\mathcal{C}}\left[\frac{2}{\gamma'_{c, k}}\Re\left\{ \left( {\mathbf{w}'_{c,k}}+{\mathbf{\tilde{w}}'_{c,k}}\right)^\dagger {\mathbf{h}}_{c, k} {\mathbf{h}}_{c, k}^\dagger \left( {\mathbf{w}_{c,k}}+{\mathbf{\tilde{w}}_{c,k}}\right) \right\}
	+ \frac{\gamma_{c, k}}{\left(\gamma'_{c, k}\right)^2} \big|{\mathbf{h}}_{c, k}^\dagger\left( {\mathbf{w}'_{c,k}}+{\mathbf{\tilde{w}}'_{c,k}}\right) \big|^2\right].\nonumber
\end{align}
Note that $\mathbf{{w}}'  = {\rm vec}(\left\lbrace {w}'_{c, k} | \forall \left(c, k\right) \in  \mathcal{C}\times \mathcal{K}\right\rbrace)$, $\mathbf{\tilde{w}}'  = {\rm vec}(\left\lbrace \tilde{w}'_{c, k} | \forall \left(c, k\right) \in  \mathcal{C}\times \mathcal{K}\right\rbrace)$ and the ${\bm \gamma}'$ are feasible fixed values satisfying constraints \eqref{eq:powe}, \eqref{eq:BgM}, \eqref{eq:q2}, \eqref{eq:BME4}, \eqref{eq:FrontC4}, and \eqref{g2}.\\
As a last reformulation step, we rewrite the EE function \eqref{eq:Obj4_} as follows:
\begin{equation}
	f_{3,\text{EE}}(c) \triangleq \frac{\sum\nolimits_{k \in \mathcal{K}} {R}_{c, k}}{\sum\limits_{(b, k) \in \mathcal{B}_c \times \mathcal{K}} t_{k,b} P_{b}^{\text{fthl}} + \sum\limits_{(b, k) \in \mathcal{B}_c \times \mathcal{K}} \tilde{t}_{k,b} P_{k}^{\text{proc}} + \sum\limits_{k \in \mathcal{K}} {u}_{c,k} P_{k}^{\text{proc}}+ P_c^{\text{Tx}}+P_c^{\text{Pr}}}. \label{eq:f3}
\end{equation}
As shown implicitly from equation \eqref{eq:f3}, the slack variables $t_{k,b}$, $\tilde{t}_{k,b}$, and ${u}_{c,k}$ denote at which point of the system the data from user $k$ is processed and distributed. To be more precise, the variables ${u}_{c, k}$ and ${t}_{k, b}$ refer to processing the data of user $k$ at cloud $c$. The variable ${u}_{c, k}$ refers to the data from user $k$ being processed at cloud $c$ and then being forwarded to all BSs participating in serving $k$, i.e., ${t}_{k, b}$ at all BSs $b\in\mathcal{B}_c$. On the other hand, $\tilde{t}_{k, b}$ refers to processing the data of user $k$ locally at BS $b$ in case the requested file for user $k$ is cached there, i.e., $c_{f_k,b}=1$.\\
Combining the above confexification steps of all constraints of the original problem \eqref{eq:Opt4} gives the following reformulated problem:
\begin{subequations}\label{eq:Opt5}
	\begin{align}
		&\underset{\mathbf{y}}{\text{maximize}}\quad \sum_{c \in \mathcal{C}} f_{3,\text{EE}}(c) \label{eq:Obj5} \\
		&\text{subject to} \quad \eqref{eq:BM}, \eqref{eq:powe}, \eqref{eq:BgM}, \eqref{eq:BgM2}, \eqref{eq:BME4}, \eqref{eq:beta1}, \eqref{eq:beta2}, \eqref{eq:beta3}, \nonumber\\
		&g_1(\mathbf{t},\mathbf{R};\mathbf{t}',\mathbf{R}') \leq 0 &&\forall b\in\mathcal{B}_c, \forall c\in\mathcal{C}, \label{eq:g1}\\
		&g_2({\bm \gamma},\mathbf{R};{\bm \gamma}') \leq 0 &&\forall k\in\mathcal{K}, \forall c\in\mathcal{C},\label{eq:g2}\\
		&g_3(\mathbf{w},\tilde{\mathbf{w}},{\bm \gamma};\mathbf{w}',\tilde{\mathbf{w}}',{\bm \gamma}') \leq 0 &&\forall k\in\mathcal{K}, \forall c\in\mathcal{C}.\label{eq:g3}
			\end{align}
\end{subequations}
where the optimization is over $\mathbf{y}$ defined as $\mathbf{y}=\left[ \mathbf{w}^T, \mathbf{\tilde{w}}^T, \mathbf{t}^T, \tilde{\mathbf{t}}^T, \mathbf{u}^T, \bm {\gamma}^T, \mathbf{r}^T \right]^T$ which contains all optimization variables, and where $\mathbf{y}'=\left[ \mathbf{w}'^T, \mathbf{\tilde{w}}'^T, \mathbf{t}'^T, \tilde{\mathbf{t}}'^T, \mathbf{u}'^T, \bm {\gamma}'^T, \mathbf{r}'^T \right]^T\in\mathcal{Y}$ is a vector containing all fixed values, where $\mathcal{Y}$ is the convex feasible set defined by the constraints \eqref{eq:BM}, \eqref{eq:powe}, \eqref{eq:BgM}-\eqref{eq:BgM2}, \eqref{eq:BME4}, \eqref{eq:beta1}-\eqref{eq:beta3}, and \eqref{eq:g1}-\eqref{eq:g3}.
\subsection{Iterative Algorithm}
Despite the non-convexity of the fractional function $f_{3,\text{EE}}(c)$ in \eqref{eq:f3}, all constraints of problem \eqref{eq:Opt5} are convex, and so \eqref{eq:Opt5} can be iteratively solved using a SICA and Dinkelbach-like algorithm. More precisely, in order to apply a Dinkelbach-like algorithm, we define $g_{4}(c)$ and $g_{5}(c)$ as the numerator and denominator of $f_{3,\text{EE}}(c)$, respectively:
\begin{equation}
	g_{4}(c) \triangleq \sum\nolimits_{k \in \mathcal{K}} {R}_{c, k}, \label{eq:fn}
\end{equation}
and
\begin{equation}
	g_{5}(c) \triangleq \sum\limits_{(b, k) \in \mathcal{B}_c \times \mathcal{K}} t_{k,b} P_{b}^{\text{fthl}} + \sum\limits_{(b, k) \in \mathcal{B}_c \times \mathcal{K}} \tilde{t}_{k,b} P_{k}^{\text{proc}} + \sum\limits_{k \in \mathcal{K}} {u}_{c,k} P_{k}^{\text{proc}}+ P_c^{\text{Tx}}+P_c^{\text{Pr}}. \label{eq:fd}
\end{equation}
We then iteratively search for a unique solution to the following auxiliary convex optimization problem:
\begin{equation}\label{eq:Fcl}
	F({\bm \lambda_{j}}) = \underset{\mathbf{y}\in\mathcal{Y}}{\text{max}} \left\{ \sum_{c\in\mathcal{C}}g_{4}(c) - \lambda_{j}(c) g_{5}(c) \right\},
\end{equation}
where ${\bm \lambda_{j}} = {\rm vec}(\{\lambda_{j}(c)|\,\forall c\in\mathcal{C}\})$ is a constant vector that is updated after each iteration as follows:
\begin{equation}\label{eq:lambda}
	\lambda_{j+1}(c) = \frac{g_{4}(c)}{g_{5}(c)}, \quad\forall c\in\mathcal{C}.
\end{equation}
To solve problem \eqref{eq:Opt5}, we distinguish between an outer and an inner loop. In the outer loop, we update the feasible fixed values for SICA, initialize ${\bm\lambda_{0}}$ so it would be used in the inner loop, and check for convergence. In the inner loop, we use the Dinkelbach-like algorithm, and solve $F({\bm \lambda_{j}})$ iteratively using \eqref{eq:Fcl}-\eqref{eq:lambda}. Such approach produces a solution to the underlying fractional program \eqref{eq:Opt5} with optimal values $\mathbf{\hat{y}}_\nu$. At iteration $\nu$ of the outer loop, we refine the feasible set $\mathbf{y}'$ using the optimal values $\mathbf{\hat{y}}_\nu$ as fixed values for the next iteration. The algorithm stops at convergence. The steps of such approach are presented in Algorithm \ref{alg:sca_dinkel}. 
\begin{algorithm}
	\caption{Combined SICA and Dinkelbach-like algorithm.} \label{alg:sca_dinkel}
	\begin{algorithmic}[1]
		\STATE $\nu = 0; \;\mathbf{y}'\in \mathcal{Y};\; f_{3,\text{EE}}^{-1}(c) = 0, \forall c\in\mathcal{C};$
		\WHILE{$\left|\sum_{c\in\mathcal{C}} (f_{3,\text{EE}}^{\nu}(c) - f_{3,\text{EE}}^{\nu-1}(c)) \right| > \epsilon$}
		\STATE $j = 0,{\bm\lambda_{j}}$ with $ F({\bm\lambda_{j}}) \geq 0;$
		\WHILE{$ F({\bm\lambda_{j}}) > \epsilon $}
		\STATE $\mathbf{\hat{y}}_\nu = \text{arg}\,\underset{\mathbf{y}\in\mathcal{Y}}{\text{max}} \left\{ \sum_{c\in\mathcal{C}} g_{4}(c) - \lambda_{j}(c) g_{5}(c) \right\}; $
		\STATE $ \lambda_{j+1}(c) = \frac{g_{4}(c)}{g_{5}(c)},\forall c\in\mathcal{C};$
		\STATE $j = j + 1;$
		\ENDWHILE
		\STATE $\mathbf{y}' = \mathbf{\hat{y}}_\nu;$
		\STATE $\nu=\nu+1;$
		\ENDWHILE
	\end{algorithmic}%
\end{algorithm}%
To start the algorithm, the fixed values $\mathbf{y}'$ are computed. First, the beamforming vectors are initialized with feasible MRC beamformers \cite{8187586}. Please note that in order to determine $\mathbf{\tilde{w}}'$, the cache placement and the user requests have to be known. Based on these beamformers, the variables ${\bm \gamma}',\mathbf{r}'$ can be computed using equations \eqref{eq:q1} and \eqref{eq:q2}. At last, we compute $\mathbf{t}',\tilde{\mathbf{t}}',\mathbf{u}'$ by replacing the inequalities in \eqref{eq:beta1}-\eqref{eq:beta3} with equalities. To initialize ${\bm\lambda_{0}}$ with $F({\bm \lambda_{j}})\geq 0$, we use the feasible fixed values $\mathbf{y}'$ and compute
\begin{equation}
	\lambda_{0}(c) = \frac{g_{4}(c)}{g_{5}(c)}, \quad\forall c\in\mathcal{C}.
\end{equation}
Although the proposed solution does not guarantee the global optimality of the original complicated mixed-integer non-convex optimization problem \eqref{eq:Opt1}, our numerical simulations illustrate the appreciable performance improvement of the proposed algorithm as compared to state-of-the-art solutions. The numerical results further highlight the fast convergence of Algorithm~\ref{alg:sca_dinkel}, as illustrated in Sec.~\ref{ssec:Conv}.

\subsection{Distributed Implementation}
This part illustrates how Algorithm \ref{alg:sca_dinkel} can be implemented in a distributed manner across the multiple CPs. Firstly, the user-to-cloud association problem \eqref{eq:Opt2} can be done on a per-cloud basis using an iterative auction algorithm \cite{7086838}. Secondly, given the set of users $\mathcal{K}_c$ served by CP $c$, i.e., $\mathcal{K}_c \triangleq \{ k\in\mathcal{K} | \hspace{1mm} \exists k\in \mathcal{K}: z_{c,k} = 1 \}$, we define local beamforming vectors associated with cloud $c$ as $\mathbf{w}_c = {\rm vec}(\left\lbrace \mathbf{w}_{c , k}| \forall k \in \mathcal{K}_c \right\rbrace) $, $\mathbf{\tilde{w}}_c = {\rm vec}(\left\lbrace \mathbf{\tilde{w}}_{c , k}|k \in \mathcal{K}_c\right\rbrace) $, the serving clusters effectively reduced to $\mathbf{t}_c  = {\rm vec}(\left\lbrace {t}_{k, b} | \left(k, b\right) \in  \mathcal{K}_c\times \mathcal{B}_c\right\rbrace)$, $\tilde{\mathbf{t}}_c  = {\rm vec}(\left\lbrace \tilde{t}_{k, b} | \left(k, b\right) \in  \mathcal{K}_c\times \mathcal{B}_c\right\rbrace)$, $\mathbf{u}_c  = {\rm vec}(\left\lbrace {u}_{c, k} | k \in \mathcal{K}_c\right\rbrace)$, also $\bm \gamma_c = {\rm vec}(\left\lbrace \gamma_{c, k}| k \in  \mathcal{K}_c \right\rbrace) $ and $\mathbf{R}_c  = {\rm vec}(\left\lbrace {R}_{c, k} | k \in  \mathcal{K}_c\right\rbrace)$. Each cloud $c$ would then be able to solve problem \eqref{eq:Opt5} locally, via exchanging the interference terms $\sum_{(c',k')\neq(c,k)}\big|{\mathbf{h}}_{c', k}^\dagger\left( {\mathbf{w}_{c',k'}}+{\mathbf{\tilde{w}}_{c',k'}}\right) \big|^2$ with all other clouds $c'\neq c$, required for constraint \eqref{eq:g3}. In fact, as per Algorithm \ref{alg:sca_dinkel}, a distributed formulation necessitates the change of a few selected steps. More specific, in step $2$ of Algorithm \ref{alg:sca_dinkel}, the convergence criteria have to be checked per cloud individually. Also, steps $4$ to $8$, i.e., the inner loop, are performed per cloud leaving step $5$ as a local optimization problem at CP $c$. Therefore, the CPs would exchange interference information in every iteration of the outer loop as an additional step, i.e., between steps $8$ and $9$, which would enable the overall distributed implementation of the algorithm. \\
\indent While considering the distributed implementation of Algorithm \ref{alg:sca_dinkel} locally, i.e., the local step of the algorithm at one cloud only, the following theorem describes the guaranteed convergence to a stationary point.
\begin{theorem} \label{theorem}
	The distributed implementation of Algorithm \ref{alg:sca_dinkel}, while executed at cloud $c$, converges to a stationary point of the $l_0$-relaxed, distributed version of problem \eqref{eq:Opt4}, i.e., the relaxed problem at cloud $c$, given the assumption of fixed interference from all other clouds $c\neq c'$.
\end{theorem}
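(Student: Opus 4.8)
The plan is to exploit the nested structure of Algorithm~\ref{alg:sca_dinkel}: the inner loop is a Dinkelbach-type iteration for a single-ratio fractional program over a \emph{fixed} convex set, while the outer loop is a successive inner-convex approximation whose surrogate constraints were constructed in Lemmas~\ref{lma1}--\ref{lma2} and in the log-linearization leading to \eqref{eq:q1Re}. First I would analyze the inner loop. For any fixed linearization point $\mathbf{y}'$, the feasible set $\mathcal{Y}$ given by \eqref{eq:BM}, \eqref{eq:powe}, \eqref{eq:BgM}--\eqref{eq:BgM2}, \eqref{eq:BME4}, \eqref{eq:beta1}--\eqref{eq:beta3}, \eqref{eq:g1}--\eqref{eq:g3} is convex and, at cloud $c$, compact: the per-BS power constraints \eqref{eq:powe} bound $\mathbf{w},\mathbf{\tilde{w}}$, and via \eqref{eq:beta1}--\eqref{eq:beta3} and \eqref{eq:q1}--\eqref{eq:q2} this also bounds $\mathbf{t},\tilde{\mathbf{t}},\mathbf{u},\bm\gamma,\mathbf{r}$. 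In the distributed instance the objective is the single ratio $f_{3,\text{EE}}(c)=g_4(c)/g_5(c)$ with $g_4(c)$ concave, $g_5(c)$ convex, and $g_5(c)\ge P_c^{\text{Pr}}>0$. I would then invoke the classical Dinkelbach argument: the parametric subproblem \eqref{eq:Fcl} is a concave maximization over a convex compact set, $F(\lambda)$ is continuous, convex and strictly decreasing, the update \eqref{eq:lambda} generates a monotonically increasing $\{\lambda_j(c)\}$ converging (superlinearly) to the unique root of $F(\lambda)=0$, which equals $\max_{\mathbf{y}\in\mathcal{Y}}f_{3,\text{EE}}(c)$, and every accumulation point of the primal iterates is a maximizer. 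Hence each execution of the inner loop returns a point $\hat{\mathbf{y}}_\nu$ that maximizes $f_{3,\text{EE}}(c)$ over the current convex surrogate set $\mathcal{Y}(\mathbf{y}'_{\nu-1})$.

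Next I would handle the outer loop. The key is that $g_1,g_2,g_3$ are valid inner approximations of the true non-convex constraint functions $\hat g_1,\hat g_2,\hat g_3$, i.e. they satisfy: (i) a global upper-bound property for every linearization point, which is exactly Lemma~\ref{lma1}, Lemma~\ref{lma2}, and concavity of $\log_2(1+\gamma)$ for $\hat g_2$; (ii) tightness, $g_i(\mathbf{y}';\mathbf{y}')=\hat g_i(\mathbf{y}')$; and (iii) gradient consistency, $\nabla_{\mathbf{y}}g_i(\mathbf{y};\mathbf{y}')|_{\mathbf{y}=\mathbf{y}'}=\nabla\hat g_i(\mathbf{y}')$ --- properties (ii) and (iii) being immediate since each surrogate replaces only the concave part of the true function by its first-order Taylor expansion. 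From (i)--(ii), $\mathbf{y}'_{\nu-1}=\hat{\mathbf{y}}_{\nu-1}$ is feasible for the $\nu$-th surrogate problem, so the iterates are well-defined, and because $\hat{\mathbf{y}}_\nu$ is its maximizer the objective is non-decreasing: $\sum_c f_{3,\text{EE}}^{\nu}(c)\ge\sum_c f_{3,\text{EE}}^{\nu-1}(c)$; bounded above on the compact $\mathcal{Y}$, this objective sequence converges, which is the stopping condition in step~2 of Algorithm~\ref{alg:sca_dinkel}. For stationarity, I would take a subsequence $\hat{\mathbf{y}}_\nu\to\mathbf{y}^\star$ (available by compactness), write the KKT system of the $\nu$-th convex surrogate at $\hat{\mathbf{y}}_\nu$ with linearization point $\hat{\mathbf{y}}_{\nu-1}$, and pass to the limit; since $\hat{\mathbf{y}}_{\nu-1}\to\mathbf{y}^\star$ as well, property (iii) makes the surrogate gradients coincide with the true gradients at $\mathbf{y}^\star$, so the limiting first-order conditions are precisely the KKT conditions of the $l_0$-relaxed problem \eqref{eq:Opt4} restricted to cloud $c$ with the inter-cloud interference frozen. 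Thus $\mathbf{y}^\star$ is a stationary point of the distributed relaxed problem.

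The genuinely delicate points --- and hence the main obstacle --- are the clean coupling of the two loops and the limiting KKT argument. For the coupling, the monotonicity step requires the inner Dinkelbach loop to return a \emph{global} maximizer of the fractional objective over the \emph{current} surrogate set; rigorously one must either let the inner tolerance tend to zero or carry an explicit $O(\epsilon)$ error through the telescoping argument. For stationarity, passing KKT conditions to the limit needs a constraint qualification to hold uniformly along $\{\hat{\mathbf{y}}_\nu\}$ together with boundedness of the dual sequence; this is where I would invoke a Slater/MFCQ-type regularity assumption on \eqref{eq:Opt4}, exploiting the decoupled structure of the power constraints \eqref{eq:powe} and the fronthaul surrogates \eqref{eq:g1}. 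Verifying the three surrogate properties themselves is, by contrast, routine given Lemmas~\ref{lma1}--\ref{lma2} and the explicit Taylor-expansion form of $g_1,g_2,g_3$.
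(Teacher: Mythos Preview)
Your proposal is correct and follows essentially the same route as the paper: a two-part argument in which the inner Dinkelbach loop is shown to solve the single-ratio fractional program over the fixed convex surrogate set, and the outer SICA loop is shown to converge by verifying that the surrogates $g_1,g_2,g_3$ satisfy tightness, global majorization, and gradient consistency at the linearization point. The paper packages these exact properties as conditions T1--T6 and checks them explicitly for $g_1$ (citing an external SICA convergence result for the final step), whereas you sketch the monotonicity/compactness/KKT-limit argument directly; the underlying mathematics is the same, and your discussion of the constraint-qualification and inner-tolerance issues actually goes slightly beyond what the paper spells out.
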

\begin{proof}
	Please refer to Appendix \ref{app3}.
\end{proof}

\subsection{Complexity Analysis}
Now we focus on the overall computational complexity of our proposed method. Starting with the inner loop that utilizes a Dinkelbach-like algorithm, the overall complexity depends on each subproblems' complexity as well as the convergence rate of the auxiliary problem series \eqref{eq:Fcl}-\eqref{eq:lambda}. Each subproblem \eqref{eq:Fcl} has a quadratic convex objective subject to quadratic convex constraints, and hence be cast as a second order cone program (SOCP) \cite{Lobo1998ApplicationsOS}. Such problems can be solved using interior-point methods. Since the total number of variables for each subproblem is given by $d_1=(\text{K}(2\text{B}(\text{L}+1)+3))$, the complexity metric becomes $\mathcal{O}((d_1)^{3.5})$. We let $\text{V}_{\text{1,max}}$ be the worst-case fixed number of iterations for convergence of the Dinkelbach-like algorithm. Since no optimization problem is solved in the outer loop, we can define $\text{V}_{2,\text{max}}$ as the worst-case fixed number of iterations needed for it to converge. The overall computational complexity of Algorithm \ref{alg:sca_dinkel} becomes, therefore, polynomial in the order of $\mathcal{O}(\text{V}_{\text{1,max}}\text{V}_{\text{2,max}}(d_1)^{3.5})$, which is an upper bound on the complexity metric. We note, finally, that our proposed method consists of two instances of Algorithm \ref{alg:sca_dinkel}, one for determining the serving clusters and one for finding a high-quality solution of beamforming vectors. The second instance of Algorithm \ref{alg:sca_dinkel} operates on the sparse optimization problem \eqref{eq:Opt6} with even fever optimization variables, which typically requires fewer iterations; thereby reducing the overall complexity of the algorithm implementation.

\subsection{Fixed Clustering-Based Baseline}
Algorithm \ref{alg:sca_dinkel}, particularly, finds optimal association variables $\mathbf{t},\tilde{\mathbf{t}},\mathbf{u}$, which defines the user-to-BS association, as known as the clustering strategy. To benchmark our solution, we now fix the clustering strategy, and focus on finding optimal beamforming vectors by revisiting problem \eqref{eq:Opt5}. The optimization variables now become group sparse variables $\mathbf{y}_2=\left[ \mathbf{w}^T, \mathbf{\tilde{w}}^T, \bm {\gamma}^T, \mathbf{r}^T \right]^T$. The fixed feasible variables are $\mathbf{y}'_2=\left[ \mathbf{w}'^T, \mathbf{\tilde{w}}'^T, \bm {\gamma}'^T, \mathbf{r}'^T \right]^T$. The optimization problem with fixed clusters can be written as
\begin{subequations}\label{eq:Opt6}
	\begin{align}
		&\underset{\mathbf{y}_2}{\text{maximize}}\quad \sum_{c \in \mathcal{C}} f_{3,\text{EE}}(c) \label{eq:Obj6}, \\
		&\text{subject to} \quad \eqref{eq:powe},\eqref{eq:BgM},\eqref{eq:BgM2},\eqref{eq:FrontRe1} \nonumber\\
		&g_2({\bm \gamma},\mathbf{r};{\bm \gamma}',\mathbf{r}') \geq 0 &&\forall k\in\mathcal{K}, \forall c\in\mathcal{C},\\
		&g_3(\mathbf{w},\tilde{\mathbf{w}},{\bm \gamma};\mathbf{w}',\tilde{\mathbf{w}}',{\bm \gamma}') \leq 0 &&\forall k\in\mathcal{K}, \forall c\in\mathcal{C}.
	\end{align}
\end{subequations}
A simpler version of Algorithm \ref{alg:sca_dinkel} is used to solve \eqref{eq:Opt6}, where the set of optimization variables is reduced to $\left\{ \mathbf{w}, \mathbf{\tilde{w}}, \bm {\gamma}, \mathbf{r} \right\}$. Such simplified approach is used to assess the performance of our proposed algorithm, as shown next.

\section{Numerical Simulations}
In this section, we present numerical simulations that illustrate the performance of our proposed algorithms. Consider an MC-RAN scenario occupying a square area of [-400 400] $\times$ [-400 400] $\text{m}^2$. The BSs and the users are randomly placed in the studied MC-RAN. The distribution is uniform. 
Each BS is equipped with $L = 2$ transmit antennas and all BSs share the same fronthaul capacity constraint. The maximum transmit power is set to $32$ dBm for each BS. The channel model used for our simulations consists of a path-loss model $\text{PL}_{b,k} = 128.1 + 37.6\log_{10}(d_{b,k}),$ where $d_{b,k}$ is the distance between BS $b$ and user $k$ in km, a log-normal shadowing with 8dB standard deviation, and a Rayleigh channel fading with zero mean and unit variance. The noise power $\sigma^2$ is set to $ -102 + 10\log_{10}(W) + n_f $ dBm, where the channel bandwidth is set to $W = 10$ MHz and the noise figure to $n_f = 15$ dBm. The number of total files for caching is $F_b = 100$, we adopt the \emph{popularity aware} cache placement scheme from \cite{7925639}. The local memory size at each BS is considered to be $10$ files unless otherwise mentioned. As for the popularity of the files, we use the Zipf distribution \cite{7925639} with parameter $a = 0.15$. Hence, for the content request mode of each individual user, it is assumed that each user requests a random file following the given popularity distribution. We choose the convergence parameters of Algorithm \ref{alg:sca_dinkel} as $\epsilon_1 = 0.2$ and $\epsilon_2 = 0.03$. Regarding the costs of the EE metric, $P_b^{\text{fthl}}$ is chosen to be $40$\% of the processing power, $P_k^{\text{proc}} = 20 \;$dBm, and $P_c^{\text{Pr}} = 38\;$dBm  %
unless specified otherwise \cite{7437385}. At last, we define the number of users to be $28$ and the number of BSs to be $10$ unless mentioned otherwise. \\
We propose optimizing beamforming vectors and serving clusters jointly in a dynamic clustering scheme. To best benchmark our methods, we use a static clustering scheme as a baseline to our proposed algorithm, where predetermined fixed clusters are used instead and the optimization is carried out on the beamforming vectors only.
To determine such clusters, we use a load balancing algorithm applied in \cite{DaiY14} for the case of a single cloud. This benchmark is referred to as fixed clustering.
Both schemes can be implemented either in a centralized or distributed fashion. The former algorithm is implemented at one CP, processing data from all clouds, while the latter algorithm is implemented at every CP, managing their respective computations.
\subsection{Impact of Fronthaul Capacity} \label{Sim:FronthaulCapacity}
\begin{figure}
	\centering
	\begin{subfigure}[b]{0.49\textwidth}
		\includegraphics[width=\linewidth]{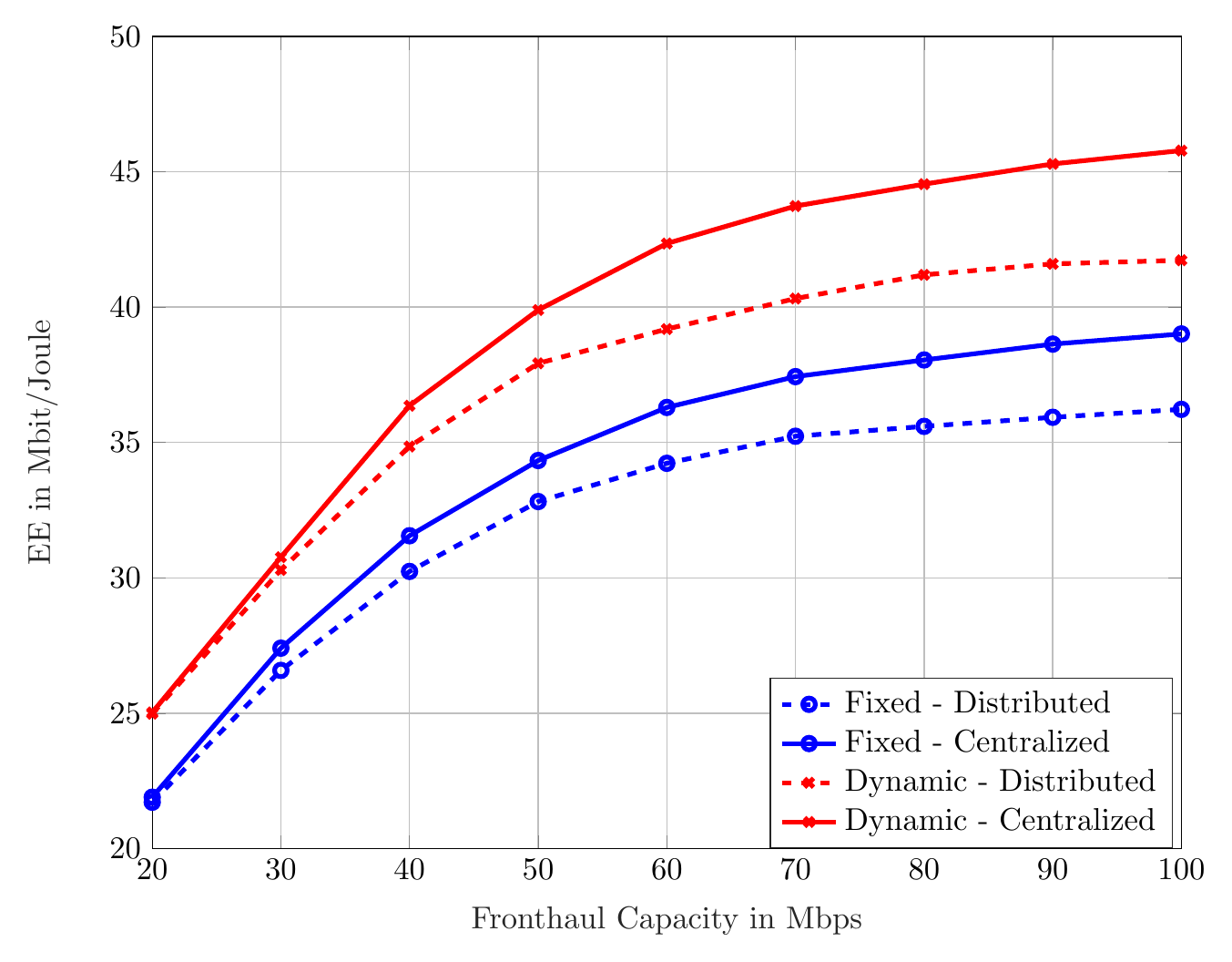}
		\caption{Cache holds up to $10$ files.}
		\label{yEE_xFthl_xcach10}
	\end{subfigure}\hfill
	\begin{subfigure}[b]{0.49\textwidth}
		\includegraphics[width=\linewidth]{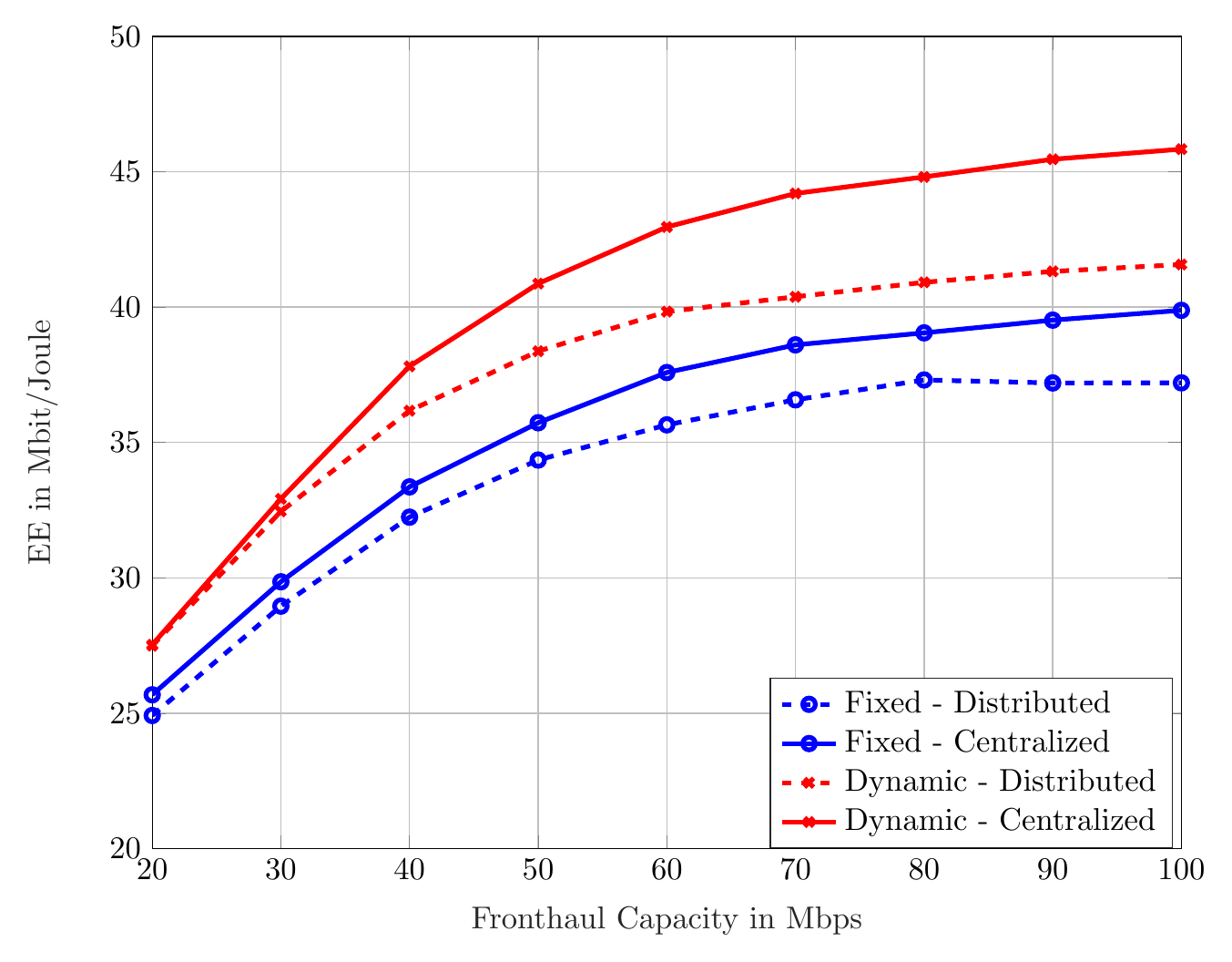}
		\caption{Cache holds up to $20$ files.}
		\label{yEE_xFthl_xcach20}
	\end{subfigure}
	\caption{EE as a function of fronthaul capacity for different cache sizes.} \label{yEE_xFthl_xcach}
	\vspace{-0.7cm}
\end{figure}
First, in Fig. \ref{yEE_xFthl_xcach}, we evaluate the performance of the two schemes, static and dynamic clustering, both using centralized and distributed implementations. Fig. \ref{yEE_xFthl_xcach} shows the EE as a function of the fronthaul capacity for two different cache sizes, i.e., $10$ in Fig. \ref{yEE_xFthl_xcach10}, and $20$ in Fig. \ref{yEE_xFthl_xcach20}. Both figures show that our proposed dynamic clustering outperforms the fixed clustering approaches regardless of being implemented distributively or centrally. Particularly, since BSs might drop out of serving clusters due to overloading or power constraints, the need for dynamic clustering emerges, as such situations cannot be compensated by a fixed cluster.\\ 
Further, the centralized implementation outperforms the distributed implementation for both schemes in Fig. \ref{yEE_xFthl_xcach10} and Fig. \ref{yEE_xFthl_xcach20}. Note that the gain of using a centralized instead of distributed implementation increases jointly with fronthaul capacity, i.e., the gap widens. 
In a low-fronthaul regime, the difference of both iterations is visibly insignificant, which highlights the role of our proposed distributed algorithm in limited fronthaul-capacity regimes, i.e., in cases where alleviating the fronthaul congestion is mostly required, as the performances of distributed and centralized algorithms become relatively similar.\\
A general observation from comparing Fig. \ref{yEE_xFthl_xcach10} and \ref{yEE_xFthl_xcach20} is that the EE gain increases when bigger cache sizes at the BSs are employed. For the dynamic centralized scheme, there is a $10$\% gain at $20\;$Mbps. This is particularly the case since the EE metric benefits from cache hits, mainly because a user can be served without utilizing the fronthaul link while requiring processing costs at the respective BS. At lower capacity regimes, fronthaul capacity is a scarce resource, and so the activation of a fronthaul link is a sensible decision, and the EE metric gains significantly from cache hits in such situations. 

\subsection{Processing Power vs. Caching Gain} \label{pprocvscach}
\begin{figure}
	\centering
	\begin{subfigure}[b]{0.49\textwidth}
		\includegraphics[width=\linewidth]{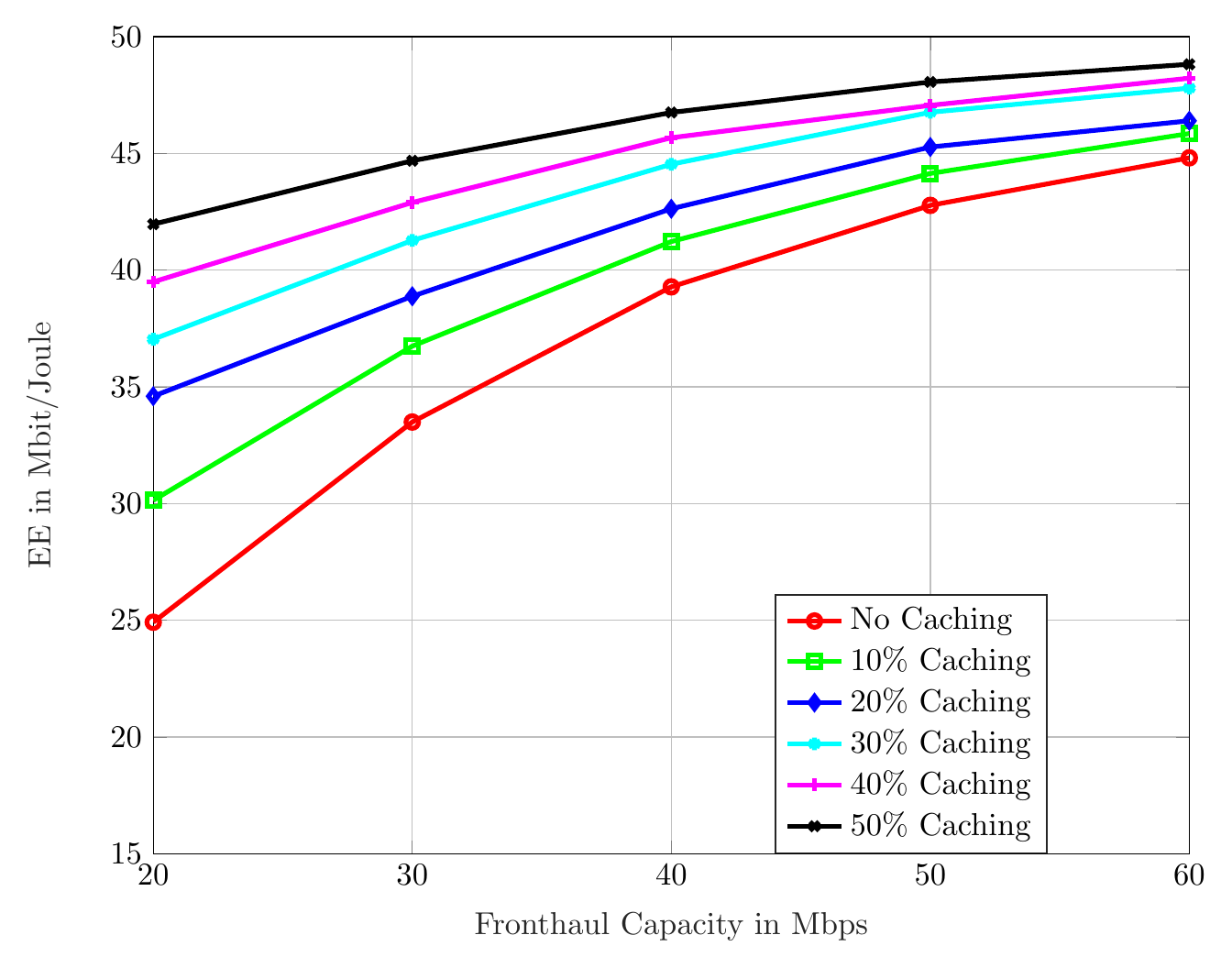}
		\caption{$P_k^{\text{proc}} = 10\;$dBm.}
		\label{yEE_xFthl_px10}
	\end{subfigure}\hfill
	\begin{subfigure}[b]{0.49\textwidth}
		\includegraphics[width=\linewidth]{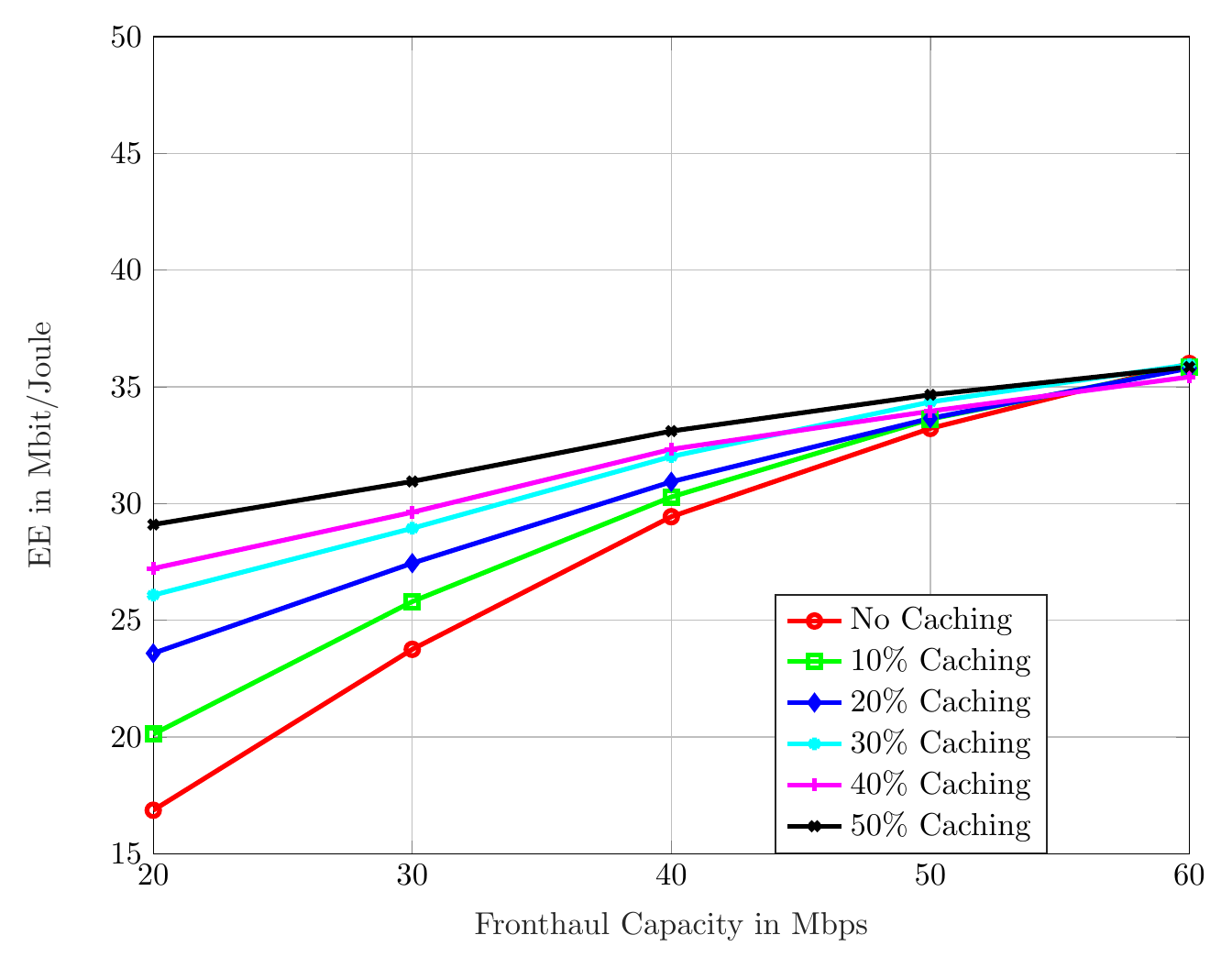}
		\caption{$P_k^{\text{proc}} = 25\;$dBm.}
		\label{yEE_xFthl_px25}
	\end{subfigure}
	\caption{EE as a function of fronthaul capacity, where different processing powers are utilized.}
	\label{yEE_xFthl_px}
	\vspace{-0.7cm}
\end{figure}
In the second set of simulations, we consider only the distributed implementation of Algorithm \ref{alg:sca_dinkel} with dynamic clustering. In Fig. \ref{yEE_xFthl_px}, we compare the EE for two different processing costs. We also vary the cache sizes for fixed costs, i.e., no cache up to a cache size of $50$ files.\\
Intuitively, a higher processing cost would reduce the EE, as this factor can not be compensated. Such behavior can be recognized comparing Fig. \ref{yEE_xFthl_px10} and Fig. \ref{yEE_xFthl_px25}. The EE for a cache size of $10$ files at $40\;$Mbps fronthaul capacity decreases by $26.54\;$\% as the processing cost increase from $10\;$dBm to $25\;$dBm.\\
In Fig. \ref{yPer_xFthl}, we plot the caching percentage gain, i.e., the EE gain of using a cache of $10$ files over not using cache, versus the fronthaul capacity for the two processing powers, so as to show the gain of the green line with respect to the red line of Fig. \ref{yEE_xFthl_px10} and Fig. \ref{yEE_xFthl_px25}. Fig. \ref{yPer_xFthl}, indeed, reaffirms that the EE gain from a bigger cache decreases with increasing fronthaul capacity. Interestingly, the gain of utilizing a bigger cache size also decreases with higher processing costs. In fact, Fig. \ref{yEE_xFthl_px25} and Fig.~\ref{yPer_xFthl} show that there is no conceivable gain from utilizing caching capabilities for $P_k^{\text{proc}} = 25\;$dBm at $60$ Mbps fronthaul capacity.
\begin{figure}
\centering
\begin{subfigure}[b]{0.49\textwidth}
	\includegraphics[width=\linewidth]{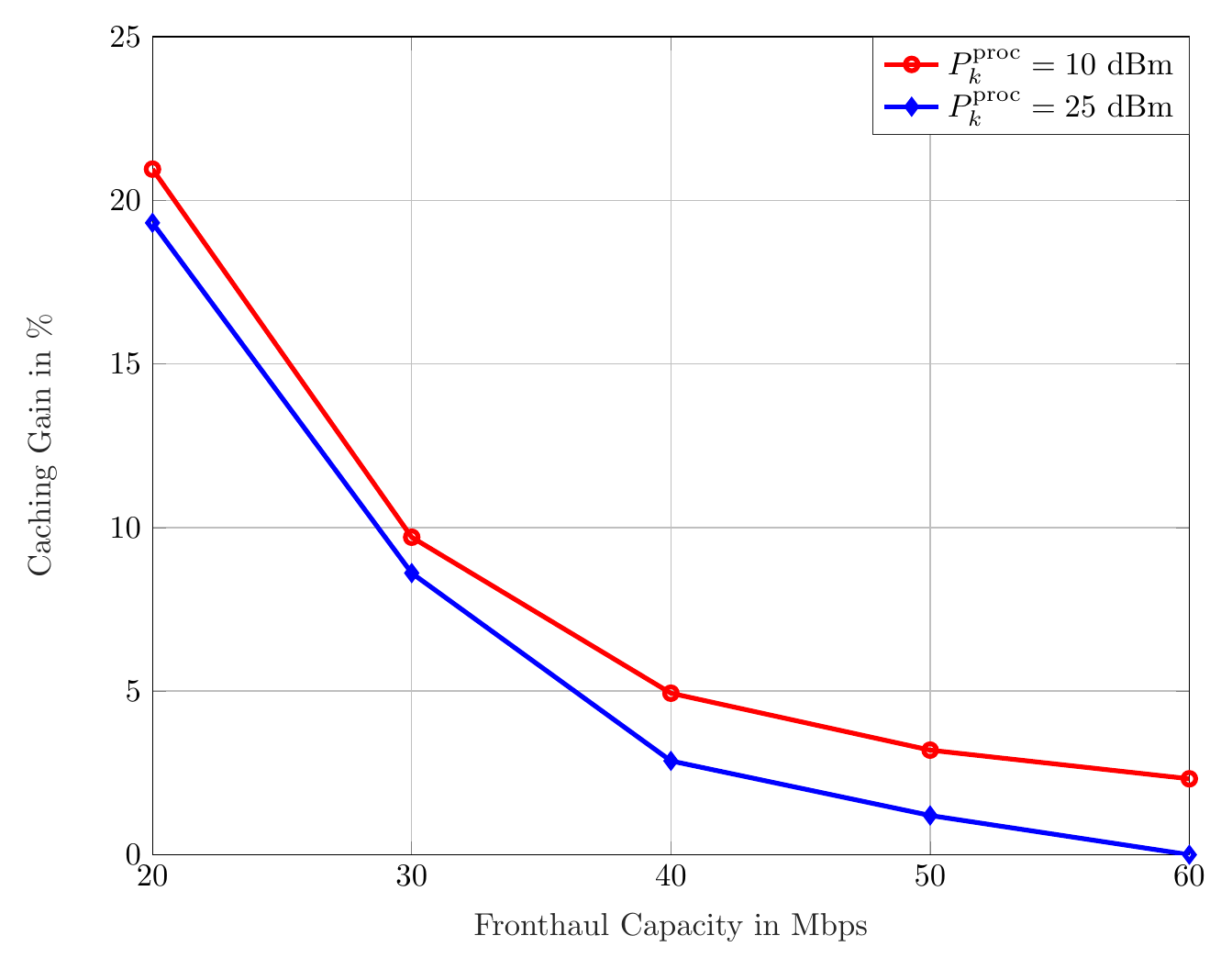}
	\caption{Caching gain in \% as a function of the fronthaul capacity for different processing powers.} \label{yPer_xFthl}
\end{subfigure}\hfill
\begin{subfigure}[b]{0.49\textwidth}
	\includegraphics[width=\linewidth]{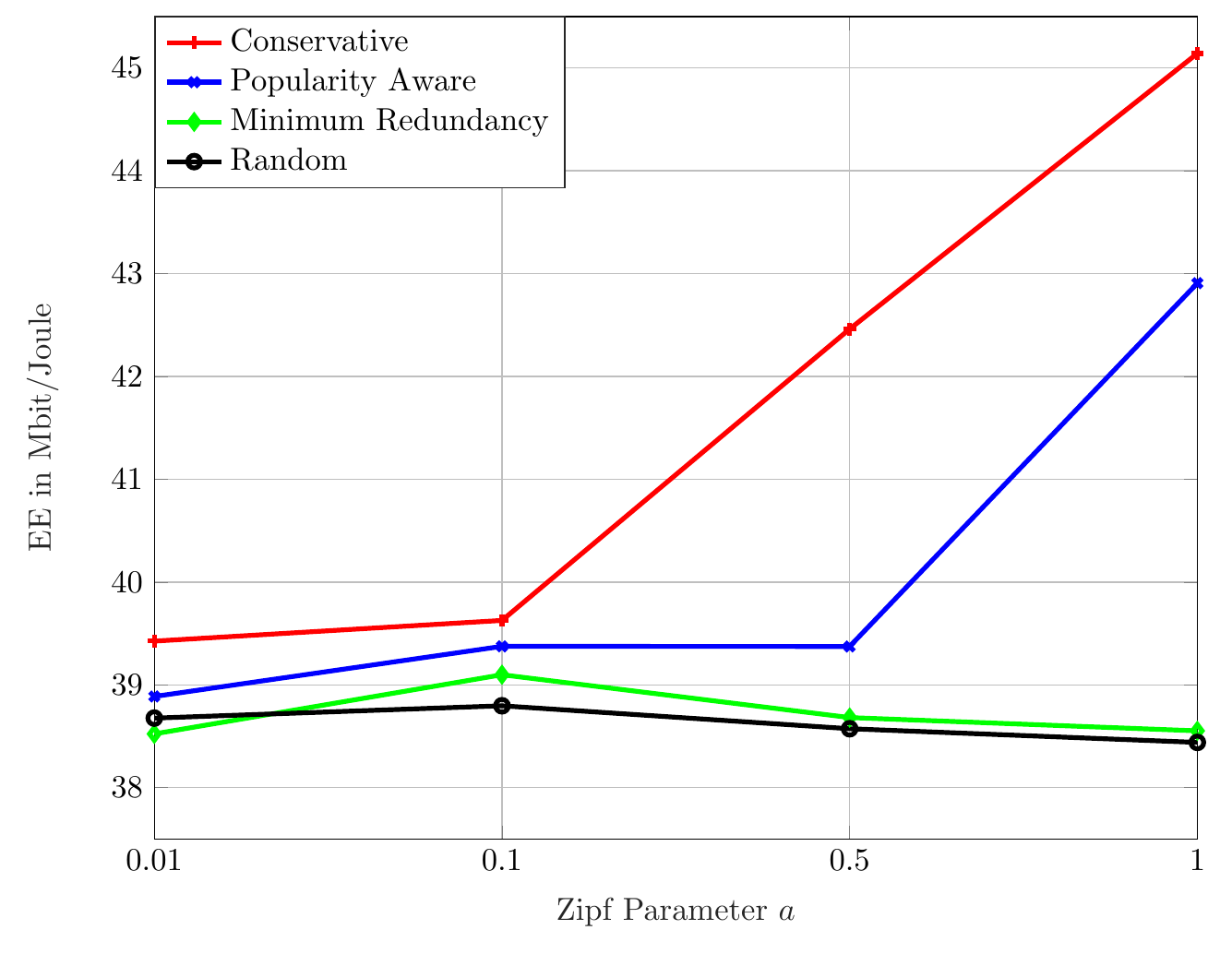}
	\caption{EE as a function of the Zipf paramter $a$, a comparison of cache placement strategies.} \label{yEE_xAlph_10cach}
\end{subfigure}
\caption{Caching gain and EE as functions of different system parameters.}
\label{yEEPer_xFthlAlph}
\vspace{-0.7cm}
\end{figure}
\subsection{Comparison of Cache Placement Strategies} \label{compcacheplace}
In addition to the \emph{popularity aware} cache placement scheme considered throughout the simulations, in this subsection, we consider three other caching strategies adopted from \cite{7925639}. The \emph{conservative} scheme stores the same files at every BS, i.e., the most popular files are cached at every BS by respecting the cache size constraints. In contrast, using the \emph{minimum redundancy} strategy, the BSs follow a specific pattern that aims at reducing redundancy by storing different contents. Also, another scheme where BSs store files randomly, independent from each other, is denoted as \emph{random} caching scheme in this subsection. \\
\indent Fig. \ref{yEE_xAlph_10cach} shows the EE as a function of the Zipf parameter $a$ for the four considered cache placement strategies. Different to previous considerations, we set $P_b^{\text{proc}} = 15$ dBm, $P_b^{\text{fthl}} = 9$ dBm and $F_{b,c} = 40$ Mbps, and we set the cache size to $10$ files. First, we observe that the EE of the \emph{conservative} and \emph{popularity aware} caching scheme is increasing with $a$. As $a$ becomes larger, some contents become more popular. In contrast, low values of $a$ imply that the content's popularity is almost uniformly distributed. Therefore, \emph{conservative} caching is particularly effective when $a$ becomes large, and \emph{popularity aware} caching behaves similarly (with a small portion of randomness \cite{7925639}). Under the \emph{minimum redundancy} caching scheme, the EE decreases as $a$ increases, i.e., only a few files are very popular, while the rest of the files experiences low popularity. This is due to the fact that using such scheme, all files, including the less popular ones, are cached. \emph{Random} caching performance experiences minor losses with increasing $a$. From Fig. \ref{yEE_xAlph_10cach}, we conclude that schemes which utilize knowledge of system parameters (i.e., $a$), e.g., \emph{conservative} and \emph{popularity aware} caching, are more beneficial from an EE perspective than more primitive schemes, and are thus better adopted in the context of our paper.
\subsection{Impact of Processing Power}
Before describing the next set of simulations, we introduce another state-of-the-art scheme, which is used as a baseline in our next set of simulations. In \cite{7488289}, the authors propose fixing the optimization variable $\mathbf{\tilde{t}}$ a priori, as cache placement and user requests are known, before jointly optimizing serving clusters and beamforming vectors. A BS that caches the requested file for user $k$ has to serve this user and process its data locally. This method leaves no choice for a BS to leave the computation to the respective CP. This is motivated by the fact, that from an EE perspective, local processing should be preferred to cloud computing since energy usage is lower in some regimes. This state-of-the-art scheme is referred to as Forced Local Computation.\\ 

In Fig. \ref{yEE_xFthl_fixtpl}, the EE as a function of processing power $P_k^\text{proc}$ is shown for two different cache sizes, where the fronthaul capacity is set to $40$ Mbps. In these simulations, our proposed scheme achieves better EE, when considering the required processing power for local caches. 
Interestingly, focusing on our proposed scheme, we notice a convergence of the EE for the two cache sizes, when $P_k^{\text{proc}}$ increases. This matches the observations from \ref{pprocvscach}. 
In fact, in networks where users require computationally intensive services, caching may not be helpful as an EE metric. The main cause for such behavior comes from the EE metric itself, since both BSs and CPs have to allocate $P_k^{\text{proc}}$ when serving user $k$. As this value increases it becomes more dominant over the fronthaul processing cost $P_k^{\text{fthl}}$, which is only applied when a CP processes user $k$'s data. From an EE perspective, the outsourcing of computation to the BSs becomes less significant with increasing processing costs.
\begin{figure}
	\centering
	\begin{subfigure}[b]{0.49\textwidth}
		\includegraphics[width=\linewidth]{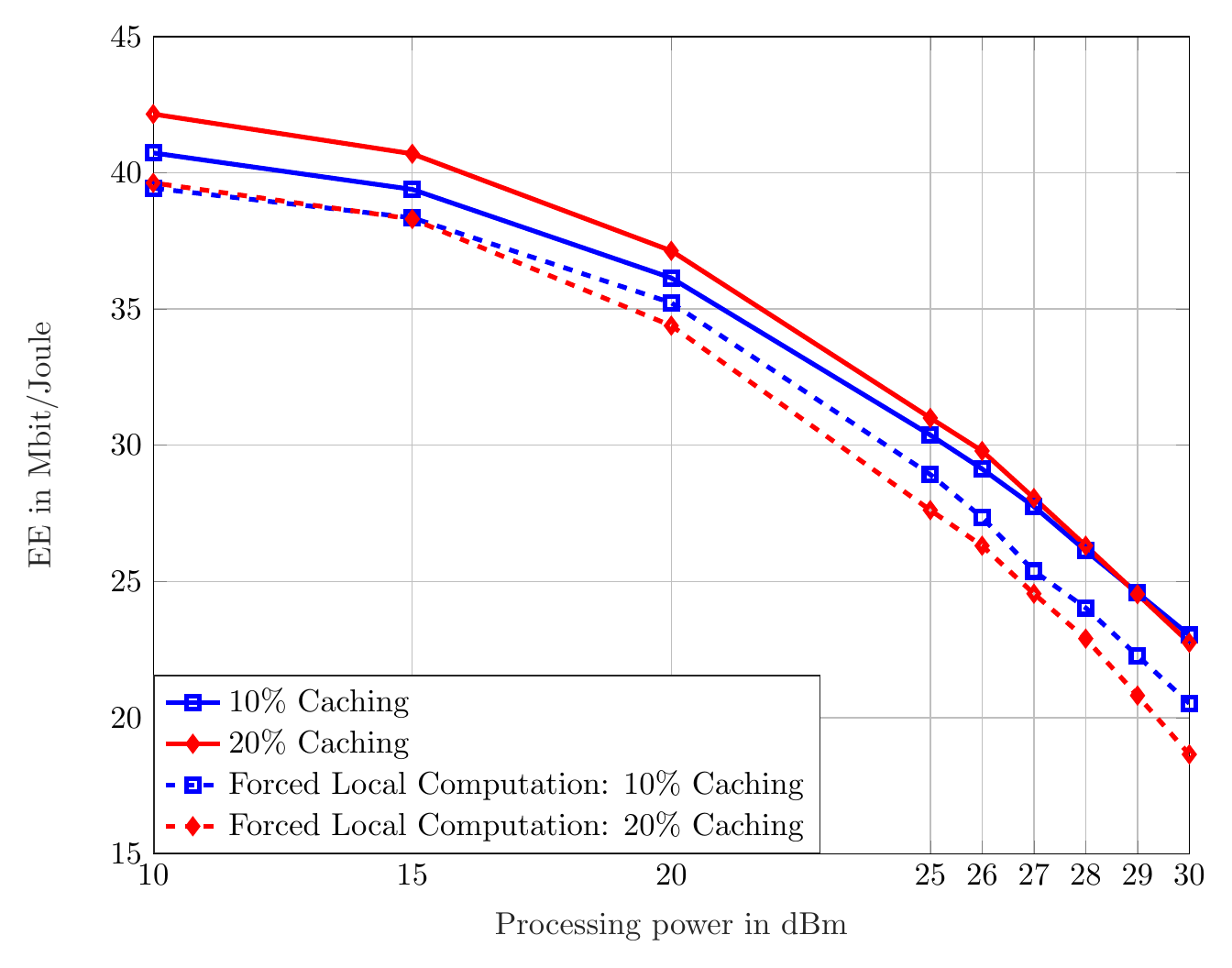}
		\caption{A comparison with a state-of-the-art.}
		\label{yEE_xFthl_fixtpl}
	\end{subfigure}\hfill
	\begin{subfigure}[b]{0.49\textwidth}
		\includegraphics[width=\linewidth]{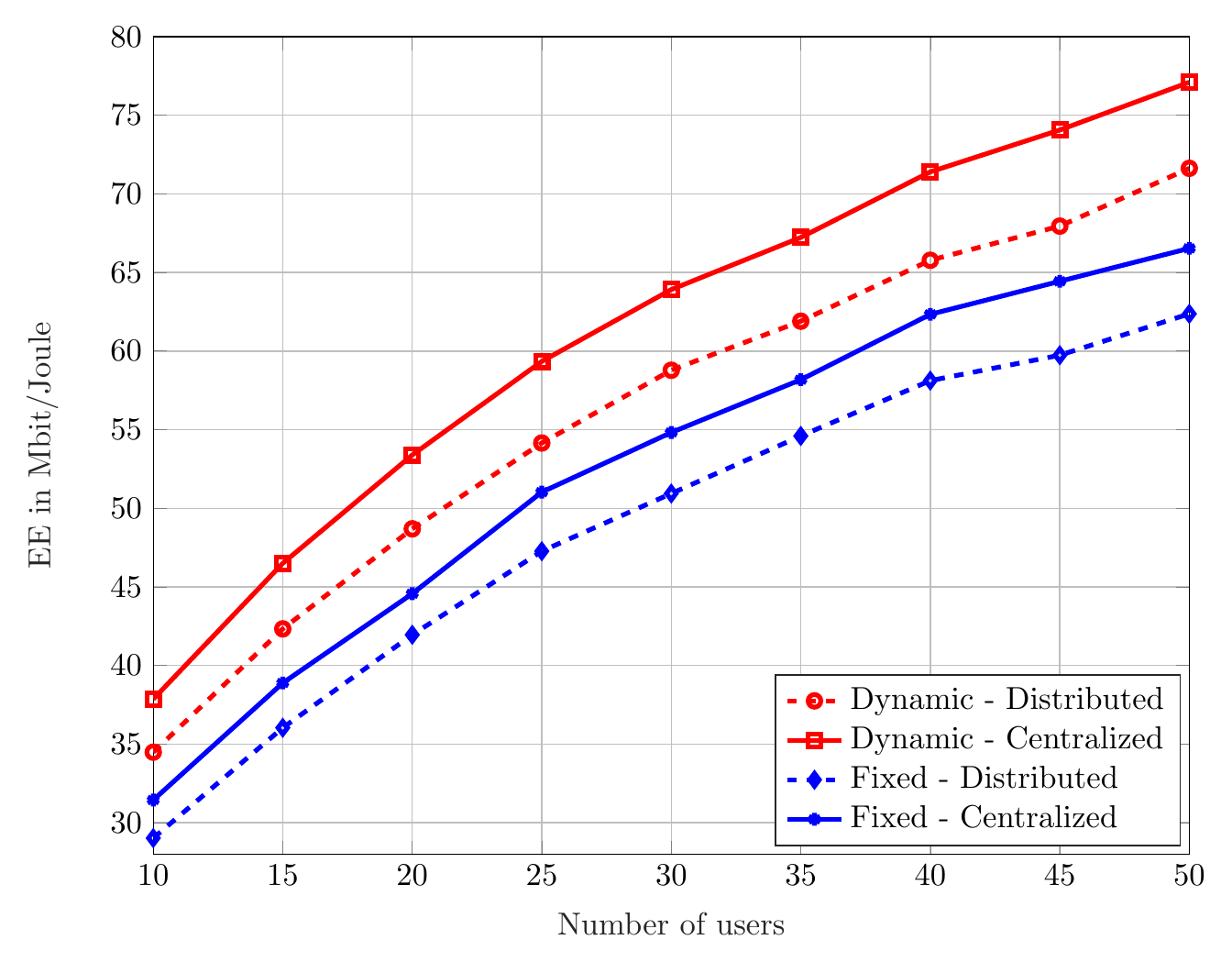}
		\caption{Cache of $10$ files.}
		\label{yEE_xNumU_xcach}
	\end{subfigure}
	\caption{EE as a function of fronthaul capacity and number of users for different cache sizes and schemes.}
	\label{yEE_xFthl_xNumU}
	\vspace{-0.7cm}
\end{figure}
\subsection{Impact of Network Densification}
In Fig. \ref{yEE_xNumU_xcach}, we examine the EE as a function of the number of users. 
Consider a network of $14$ BSs, where the processing power is set to $P_k^{\text{proc}} = 10$ dBm and the fronthaul capacity is $80$ Mbps. As a first observation, we find, that generally with more users the EE increases.
Similar to previous observations, we observe that the dynamic implementation always outperforms the fixed association implementation for all schemes, which highlights the numerical aspect of the clustering approach proposed in our paper.
\subsection{Convergence Behavior}\label{ssec:Conv}
\begin{figure}
	\begin{center}
		\includegraphics[scale=.58]{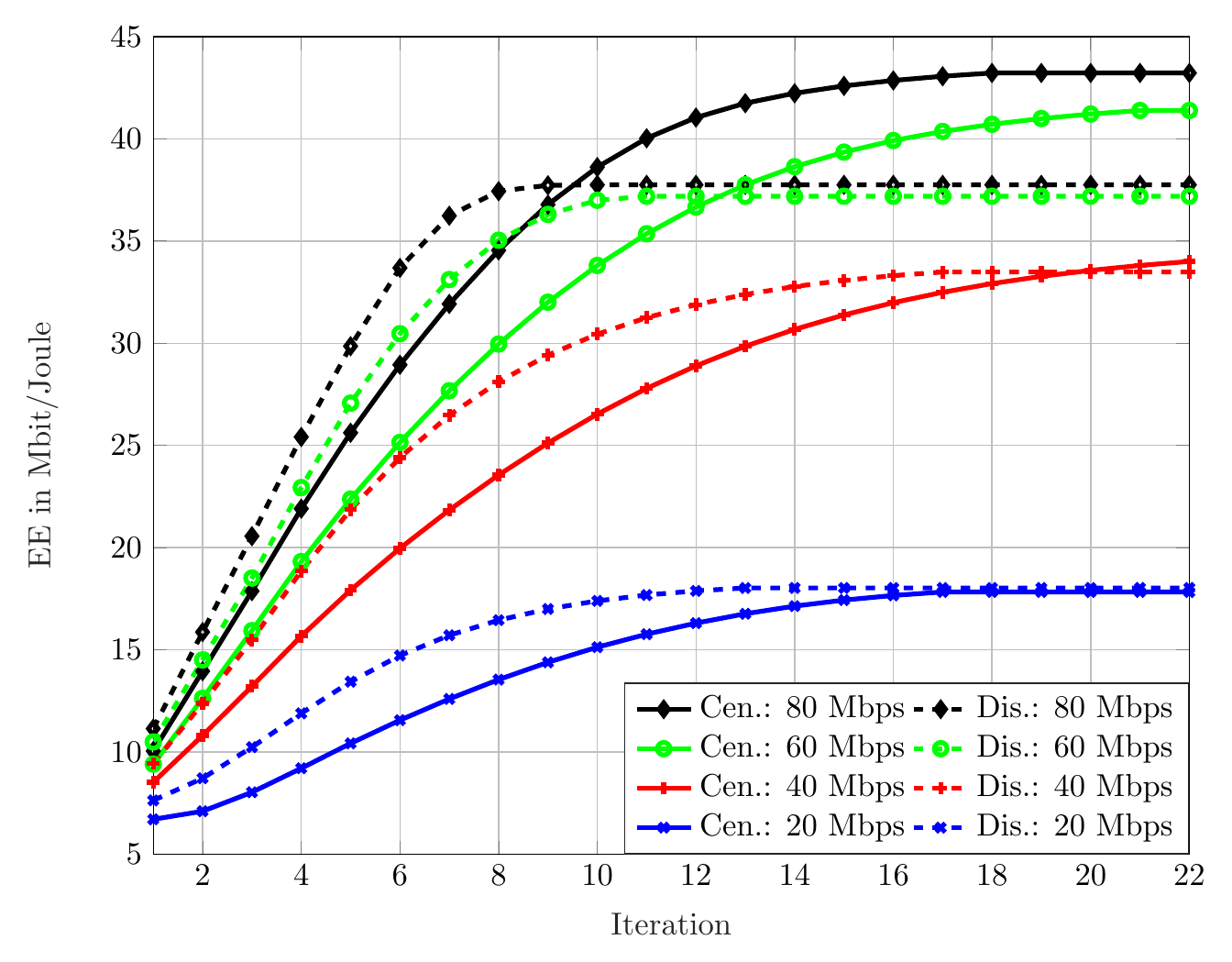}
		\caption{Convergence behavior of the distributed and centralized implementation of our proposed algorithm.} \label{convergence_tz23_20percent_alternative}
	\end{center}
	\vspace{-0.7cm}
\end{figure}
In another set of simulations, the processing power $P_k^{\text{proc}}$ is set to $10\;$dBm and the BSs can cache up to $20$ files. We only consider our proposed dynamic clustering scheme. 
To now illustrate the convergence behavior of our proposed Algorithm \ref{alg:sca_dinkel}, we plot the objective as a function of the number of iterations executed until converging in Fig. \ref{convergence_tz23_20percent_alternative}. We compare a distributed as well as a centralized implementation for different fronthaul capacities, i.e., $F_{b,c} \in \{20,40,60,80\}$ Mbps. In Fig. \ref{convergence_tz23_20percent_alternative}, the advantages of our algorithm in terms of convergence behavior and execution speed are highlighted, as the maximum iterations required for convergence are comparatively low. In all cases, the centralized implementation takes more iterations until convergence as compared to the distributed implementation. Fig. \ref{convergence_tz23_20percent_alternative} also shows that, at low fronthaul capacities, the distributed algorithm has acceptable loss in terms of EE against a centralized approach and, at the same time, performs better in terms of convergence, which constitutes an additional numerical feature of our proposed distributed resource allocation framework.

\subsection{Comparison with Sum-Rate Maximization Algorithm}\label{comp}
Finally, we make a connection to the weighted sum-rate maximization in MC-RAN from our conference version \cite{ICC}, which solves a mixed discrete-continuous non-convex optimization problem using a different fractional programming approach to tackle the non-convex part and $l_0$-norm approximation for the binary association part. Towards that end, \cite{ICC} proposes an efficient iterative algorithm for joint association and beamformer design that can be implemented in a distributed fashion across multiple CPs. 
To conduct a fair comparison, we now examine a special case of Algorithm \ref{alg:sca_dinkel}, i.e., the case when there are no cache hits ($\mathcal{K}_1 = \emptyset$ and $\mathcal{K}_2 = \mathcal{K}$). This is due to the fact, that we do not consider edge intelligence and caches in \cite{ICC}. 
We fix the processing power as $P_k^{\text{proc}} = 10$ dBm and the number of BSs as $14$. To tackle the problem of comparing a maximized sum-rate to a maximized EE, we propose converting the results of the algorithm from \cite{ICC} into the EE metric. Therefore we parse the final optimization variables from the sum-rate maximization into the EE formulation in \eqref{eq:f3}. Such comparison is done for both centralized and distributed implementations in Fig. \ref{yEE_xFthl_comp}.\\
Fig. \ref{yEE_xFthl_comp} shows that, for every fronthaul capacity, both Dinkelbach-like implementations, referring to Algorithm \ref{alg:sca_dinkel}, outperform the sum-rate maximization (SRM) implementation. Interestingly, we see a difference in the gain of using a centralized over a distributed implementation in the SRM case. Different from previous observations about our proposed scheme, the loss of the distributed SRM implementation in terms of EE is vastly increased. 
\begin{figure}
	\begin{center}
		\includegraphics[scale=.58]{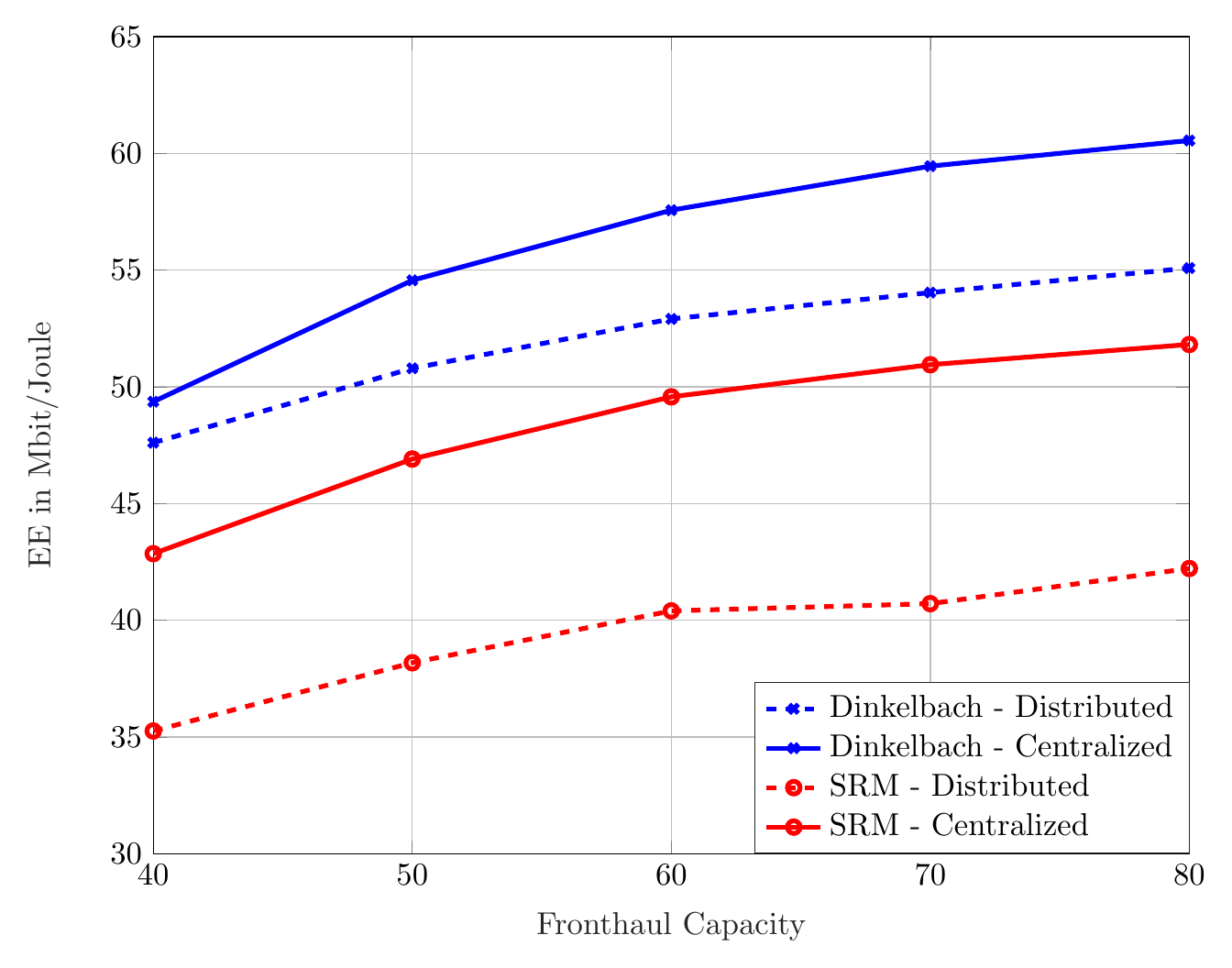}
		\caption{EE as a function of fronthaul capacity comparing the proposed algorithm and the algorithm from \cite{ICC}. We use the abbreviation SRM for sum-rate maximization.} \label{yEE_xFthl_comp}
	\end{center}
	\vspace{-0.7cm}
\end{figure}

\section{Conclusion}
Managing wireless systems with multiple CPs is a promising technique to cope with B5G network requirements. This paper considers an MC-RAN, where each cloud is connected to a distinct set of BSs via limited capacity fronthaul links. The BSs are equipped with local cache storage and baseband processing capabilities, as a means to alleviate the fronthaul congestion problem. The paper then investigates the problem of jointly assigning users to clouds and determining their beamforming vectors so as to maximize the network-wide energy efficiency subject to fronthaul capacity and transmit power constraints. This paper solves such a mixed discrete-continuous, non-convex optimization problem using fractional programming and successive inner-convex approximation techniques to deal with the non-convexity of the continuous part of the problem, and $l_0$-norm approximation to account for the binary association part. A highlight of the proposed algorithm is its capability of being implemented in a distributed fashion across the network's multiple clouds through a reasonable amount of information exchange. The numerical simulations illustrate the pronounced role the proposed algorithm plays in alleviating the interference of large-scale MC-RANs, especially in dense networks.	

\appendices

\section{Proof of lemma \ref{lma1}} \label{app1}
	First we revisit the function \eqref{DC} and define the difference of convex functions
	\begin{equation}
		\vartheta(\mathbf{x}) \triangleq \sum_{k\in\mathcal{K}}\frac{1}{4}\big( \underbrace{\left(t_{k,b}+R_{c,k}\right)^2}_{\vartheta^+(\mathbf{x})} - \underbrace{\left(t_{k,b} - R_{c,k}\right)^2}_{\vartheta^-(\mathbf{x})}\big)-F_{b,c},
	\end{equation}
	where the functions $\vartheta^+(\mathbf{x})$ and $\vartheta^-(\mathbf{x})$ are convex and $\mathbf{x}=\left[\mathbf{t}^T,\mathbf{r}^T\right]^T$. Now we only replace the concave part $-\vartheta^-(\mathbf{x})$ by its first order approximation around point $\left(t'_{k,b},R'_{c,k}\right)$ and thus get the convex upper approximation of function $\vartheta(\mathbf{x})$:
	\begin{equation}
		\hat{\vartheta}(\mathbf{x},\mathbf{x}') \triangleq \sum_{k\in\mathcal{K}}\frac{1}{4} \left(\vartheta^+(\mathbf{x}) - \vartheta^-(\mathbf{x}') - \nabla_\mathbf{x}\vartheta^-(\mathbf{x}')^T\left(\mathbf{x}-\mathbf{x}'\right)\right)-F_{b,c}.
	\end{equation}
	Here we have $\mathbf{x}'=\left[\mathbf{t}'^T,\mathbf{r}'^T\right]^T$, and we can rewrite function $\hat{\vartheta}(\mathbf{x},\mathbf{x}')$ as
	\begin{align}
		\hat{\vartheta}(\mathbf{x},\mathbf{x}') \triangleq& \sum_{k\in\mathcal{K}}\frac{1}{4}\bigg( \left(t_{k,b}+R_{c,k}\right)^2 - \left(t'_{k,b} - R'_{c,k}\right)^2 \nonumber\\
		&- 2\left(t'_{k,b}-R'_{c,k}\right)\left(t_{k,b}-t'_{k,b}\right) + 2\left(t'_{k,b}-R'_{c,k}\right)\left(R_{k,b}-R'_{k,b}\right) \bigg)-F_{b,c}.
	\end{align}
	Since $\hat{\vartheta}(\mathbf{x},\mathbf{x}')$ is a convex upper approximation of $\vartheta(\mathbf{x})$ the following inequality is valid: $\vartheta(\mathbf{x}) \leq \hat{\vartheta}(\mathbf{x},\mathbf{x}')$. As $\hat{\vartheta}(\mathbf{x},\mathbf{x}')$ can be transformed into $g_1(\mathbf{t},\mathbf{r},\mathbf{t}',\mathbf{r}')$, and $\vartheta(\mathbf{x})$ corresponds to the right hand side of \eqref{eq:lma11}, this completes the proof.
\section{Proof of lemma \ref{lma2}} \label{app2}
As the functions $\nabla_{\mathbf{x}}\zeta(\mathbf{x},\xi) = \frac{2}{\xi}(\mathbf{x})^\dagger$ and $ \nabla_{\xi}\zeta(\mathbf{x},\xi) = - \frac{1}{(\xi)^2} |\mathbf{x}|^2 $ are partial derivatives of $\zeta(\mathbf{x},\xi)$, the first-order Taylor expansion is
\begin{align}
	\tilde{\zeta}(\mathbf{x},\xi,\mathbf{x}',\xi') &= \zeta(\mathbf{x}',\xi') + \nabla_{\mathbf{x}}\zeta(\mathbf{x}',\xi') (\mathbf{x}-\mathbf{x}') + \nabla_{\xi}\zeta(\mathbf{x}',\xi') (\xi - \xi')	\nonumber\\
	&= \frac{|\mathbf{x}'|^2}{\xi'} + \frac{2}{\xi'}(\mathbf{x}')^\dagger (\mathbf{x}-\mathbf{x}') - \frac{1}{(\xi')^2} |\mathbf{x'}|^2 (\xi - \xi') \nonumber\\
	&= \frac{|\mathbf{x}'|^2}{\xi'} + \frac{2}{\xi'}(\mathbf{x}')^\dagger \mathbf{x} - \frac{2}{\xi'}(\mathbf{x}')^\dagger \mathbf{x}' - \frac{\xi}{(\xi')^2} |\mathbf{x'}|^2  + \frac{\xi'}{(\xi')^2} |\mathbf{x'}|^2 \nonumber\\
	&= \frac{1}{\xi'} |\mathbf{x}'|^2 + \frac{1}{\xi'} |\mathbf{x'}|^2 - \frac{2}{\xi'} |\mathbf{x}'|^2 + \frac{2}{\xi'}(\mathbf{x}')^\dagger \mathbf{x} - \frac{\xi}{(\xi')^2} |\mathbf{x'}|^2 \nonumber\\
	&= \frac{2}{\xi'}\Re\big\{ \left(\mathbf{x}'\right)^\dagger \mathbf{x} \big\} - \frac{\xi}{(\xi')^2} |\mathbf{x'}|^2.
\end{align}
This completes the proof.

\section{Proof of theorem \ref{theorem}} \label{app3}
The proof of Theorem \ref{theorem} consists of two parts. First, we show that the inner loop, i.e., the Dinkelbach algorithm, of Algorithm \ref{alg:sca_dinkel} converges to the global optimal solution of problem \eqref{eq:Opt5} in each step of the outer loop for the distributed case, assuming fixed inter-cloud interference. In the second step, the outer loop's convergence of Algorithm \ref{alg:sca_dinkel} to a stationary solution of problem \eqref{eq:Opt4} is shown.\\
\indent As a preliminary, the distributed problem \eqref{eq:Opt4} at cloud $c$ becomes
\begin{subequations}\label{eq:Opt4_dis}
	\begin{align}
		&\underset{\mathbf{w}_c, \mathbf{\tilde{w}}_c, \bm {\gamma}_c,\mathbf{r}_c}{\text{maximize}}\quad f_{2,\text{EE}}(c) \label{eq:Obj4_dis} \\
		&\text{subject to} \quad \eqref{eq:BM}, \eqref{eq:powe}, \eqref{eq:FrontC}, \eqref{eq:BgM}, \eqref{eq:BgM2}, \eqref{eq:q1}, \eqref{eq:q2}, \eqref{eq:BME4}, \eqref{eq:FrontC4}, \nonumber
	\end{align}
\end{subequations}
where the constraints are evaluated at cloud $c$ only and all other clouds $c' \neq c$ contribute fixed interference. Similarly, problem \eqref{eq:Opt5} becomes
\begin{subequations}\label{eq:Opt5_dis}
	\begin{align}
		&\underset{\mathbf{y}_c}{\text{maximize}}\quad f_{3,\text{EE}}(c) \label{eq:Obj5_dis} \\
		&\text{subject to} \quad \eqref{eq:BM}, \eqref{eq:powe}, \eqref{eq:BgM}, \eqref{eq:BgM2}, \eqref{eq:BME4}, \eqref{eq:beta1}, \eqref{eq:beta2}, \eqref{eq:beta3}, \nonumber\\
		&g_1(\mathbf{t}_c,\mathbf{r}_c;\mathbf{t}_c',\mathbf{r}_c') \leq 0 &&\forall b\in\mathcal{B}_c, \label{eq:g1_dis}\\
		&g_2({\bm \gamma}_c,\mathbf{r}_c;{\bm \gamma}_c') \leq 0 &&\forall k\in\mathcal{K},\label{eq:g2_dis}\\
		&g_3(\mathbf{w}_c,\tilde{\mathbf{w}}_c,{\bm \gamma}_c;\mathbf{w}_c',\tilde{\mathbf{w}}_c',{\bm \gamma}_c') \leq 0 &&\forall k\in\mathcal{K},\label{eq:g3_dis}
	\end{align}
\end{subequations}
where the optimization is over $\mathbf{y}_c$ defined as $\mathbf{y}_c=\left[ \mathbf{w}_c^T, \mathbf{\tilde{w}}_c^T, \mathbf{t}_c^T, \tilde{\mathbf{t}}_c^T, \mathbf{u}_c^T, \bm {\gamma}_c^T, \mathbf{r}_c^T \right]^T$ containing all optimization variables. The auxiliary convex optimization problem for the Dinkelbach algorithm from \eqref{eq:Fcl}, now takes the form of
\begin{equation}\label{eq:Fcl_dis}
	F({\lambda_{j}}) = \underset{\mathbf{y}_c\in\mathcal{Y}_c}{\text{max}} \left\{ g_{4}(c) - \lambda_{j}(c) g_{5}(c) \right\},
\end{equation}
where $\mathcal{Y}_c$ is the convex feasible set from problem \eqref{eq:Opt5_dis}.
\\
\indent The steps of showing the convergence of the Dinkelbach algorithm to a global optimal solution of \eqref{eq:Opt5} follow \cite[Proposition 3.2]{8187084} and are omitted here for brevity.\\
\indent Note that all following discussions are made under the assumption of investigating a single cloud $c$, whereas the other clouds keep the interference fixed.
The approximations in \eqref{eq:FrontRe2}, \eqref{eq:q1Re} and \eqref{eq:g3_} satisfy the conditions in \cite[Section III.c, Assumptions 1-3]{7862919}. To prove this, we shift our focus on \eqref{eq:FrontRe2}. We define
\begin{equation}
	\tilde{g}_1(\mathbf{x}) \triangleq \sum_{k \in \mathcal{K}}t_{k,b} \, r_{c,k} - F_{b,c}, \label{tildeg1_theorem1}
\end{equation}
and $g_1(\mathbf{x},\mathbf{x}') \triangleq g_1(\mathbf{t}_c,\mathbf{r}_c,\mathbf{t}_c',\mathbf{r}_c')$, where $\mathbf{x} = [\mathbf{t}_c^T,\mathbf{r}_c^T]^T$ and $\mathbf{x}' = [\mathbf{t}_c'^T,\mathbf{r}_c'^T]^T$. Towards this end, we show that following properties are satisfied:
\begin{itemize}
	\item[T1)] $g_1(\mathbf{x}',\mathbf{x}') = \tilde{g}_1(\mathbf{x}')$.
	\item[T2)] $g_1(\mathbf{x},\mathbf{x}') \geq \tilde{g}_1(\mathbf{x}), \qquad \forall\mathbf{x}'\in \mathcal{Z}$.
	\item[T3)] $g_1(\bullet,\mathbf{x}') \text{ is a convex function}, \qquad \forall\mathbf{x}'\in \mathcal{Z}$.
	\item[T4)] $g_1(\bullet,\bullet) \text{ is a continuous function on the feasible set.}$
	\item[T5)] $\nabla_{\mathbf{x}}g_1(\mathbf{x}',\mathbf{x}') = \nabla_{\mathbf{x}}\tilde{g}_1(\mathbf{x}')$.
	\item[T6)] $\text{The function } \nabla_{\mathbf{x}}g_1(\bullet,\bullet) \text{ is continuous on the feasible set.}$
\end{itemize}
Injecting $\mathbf{x}'$ into \eqref{eq:FrontRe2} and \eqref{tildeg1_theorem1} ensures the equality in T1
\begin{align}
	g_1(\mathbf{x}',\mathbf{x}') &= \tilde{g}_1(\mathbf{x}') \nonumber \\
	\sum_{k \in \mathcal{K}}\big( \left(t'_{k,b}+r'_{c,k}\right)^2 - 2 \left(t'_{k,b}-r'_{c,k}\right)\left(t'_{k,b}-r'_{c,k}\right) + &\nonumber\\
	\left(t'_{k,b}-r'_{c,k}\right)^2\big) - 4 F_{b,c} &= \sum_{k \in \mathcal{K}}t'_{k,b} \, r'_{c,k} - F_{b,c} \qquad\qquad \nonumber \\
	\frac{1}{4} \left( \left(t'_{k,b}+r'_{c,k}\right)^2 - \left(t'_{k,b}-r'_{c,k}\right)^2 \right) &= t'_{k,b} r'_{c,k} \nonumber \\
	\frac{1}{4} \left( 4 t'_{k,b}\, r'_{c,k}\right) &= t'_{k,b} \, r'_{c,k} \nonumber .
\end{align}
T2 follows directly from Lemma~\ref{lma1}. To verify T3 and T4, we take a closer look at the structure of $g_1(\bullet,\mathbf{x}')$ with
\begin{equation}
	g_1(\mathbf{x},\mathbf{x}') = \sum_{k \in \mathcal{K}}\big( \underbrace{\left(t_{k,b}+r_{c,k}\right)^2}_{\text{convex}} - \underbrace{2 \left(t'_{k,b}-r'_{c,k}\right)\left(t_{k,b}-r_{c,k}\right)}_{\text{linear}} + \left(t'_{k,b}-r'_{c,k}\right)^2\big) - 4 F_{b,c}.
\end{equation}
As $g_1(\bullet,\mathbf{x}')$, with fixed $\mathbf{x}'$, consists of a convex quadratic function subtracted by a linear function, which is convex, T3 holds. $g_1(\bullet,\bullet)$ has convex quadratic and bilinear terms, thus T4 is satisfied.
For T5 and T6 we take the partial derivatives as follows
\begin{equation}
	\nabla_{\mathbf{x}}\tilde{g}_1(\mathbf{x})
	\begin{cases}
		\frac{\partial \tilde{g}_1(\mathbf{x})}{\partial \mathbf{t}} = \sum_{k \in \mathcal{K}}r_{c,k}\\
		\frac{\partial \tilde{g}_1(\mathbf{x})}{\partial \mathbf{r}} = \sum_{k \in \mathcal{K}}t_{k,b}
	\end{cases}; \label{nabla1}
\end{equation}
\begin{equation}
	\nabla_{\mathbf{x}}g_1(\mathbf{x},\mathbf{x}')
	\begin{cases}
		\frac{\partial g_1(\mathbf{x},\mathbf{x}')}{\partial \mathbf{t}} = \sum_{k \in \mathcal{K}}\frac{1}{4}\left( 2\left(t_{k,b}+r_{c,k}\right) - 2\left(t'_{k,b}-r'_{c,k}\right) \right)\\
		\frac{\partial g_1(\mathbf{x},\mathbf{x}')}{\partial \mathbf{r}} = \sum_{k \in \mathcal{K}}\frac{1}{4}\left( 2\left(t_{k,b}+r_{c,k}\right) + 2\left(t'_{k,b}-r'_{c,k}\right) \right)
	\end{cases}. \label{nabla2}
\end{equation}
The substitution of $\mathbf{x}$ with $\mathbf{x}'$ in \eqref{nabla1} and \eqref{nabla2} is straightforward and yields the equality T5. Since $\nabla_{\mathbf{x}}g_1(\bullet,\bullet)$ consists of linear terms, T6 holds. Similarly, these steps can be followed to verify T1-T6 for the remaining two reformulations in \eqref{eq:q1Re} and \eqref{eq:g3_}. This completes the proof.

\bibliographystyle{IEEEtran}
\bibliography{bibliography}

\begin{thebibliography}{10}
\providecommand{\url}[1]{#1}
\csname url@samestyle\endcsname
\providecommand{\newblock}{\relax}
\providecommand{\bibinfo}[2]{#2}
\providecommand{\BIBentrySTDinterwordspacing}{\spaceskip=0pt\relax}
\providecommand{\BIBentryALTinterwordstretchfactor}{4}
\providecommand{\BIBentryALTinterwordspacing}{\spaceskip=\fontdimen2\font plus
\BIBentryALTinterwordstretchfactor\fontdimen3\font minus
  \fontdimen4\font\relax}
\providecommand{\BIBforeignlanguage}[2]{{%
\expandafter\ifx\csname l@#1\endcsname\relax
\typeout{** WARNING: IEEEtran.bst: No hyphenation pattern has been}%
\typeout{** loaded for the language `#1'. Using the pattern for}%
\typeout{** the default language instead.}%
\else
\language=\csname l@#1\endcsname
\fi
#2}}
\providecommand{\BIBdecl}{\relax}
\BIBdecl

\bibitem{ICC}
A.~A. {Ahmad}, H.~{Dahrouj}, A.~{Chaaban}, A.~{Sezgin}, T.~Y. {Al-Naffouri},
  and M.~{Alouini}, ``Distributed cloud association and beamforming in downlink
  multi-cloud radio access networks,'' in \emph{2020 IEEE International
  Conference on Communications Workshops (ICC Workshops)}, 2020, pp. 1--6.

\bibitem{8820755}
L.~{Zhang}, Y.~{Liang}, and D.~{Niyato}, ``{ 6G Visions: Mobile
  ultra-broadband, super internet-of-things, and artificial intelligence},''
  \emph{China Communications}, vol.~16, no.~8, pp. 1--14, Aug. 2019.

\bibitem{7397856}
M.~R. {Palattella}, M.~{Dohler}, A.~{Grieco}, G.~{Rizzo}, J.~{Torsner},
  T.~{Engel}, and L.~{Ladid}, ``Internet of things in the {5G} era: Enablers,
  architecture, and business models,'' \emph{IEEE Journal on Selected Areas in
  Communications}, vol.~34, no.~3, pp. 510--527, 2016.

\bibitem{8088528}
L.~{Zhang}, M.~{Xiao}, G.~{Wu}, M.~{Alam}, Y.~{Liang}, and S.~{Li}, ``A survey
  of advanced techniques for spectrum sharing in {5G} networks,'' \emph{IEEE
  Wireless Communications}, vol.~24, no.~5, pp. 44--51, 2017.

\bibitem{erricson1}
\BIBentryALTinterwordspacing
``Ericsson mobility report november 2019,'' Ericson, Tech. Rep. MSU-CSE-06-2,
  Nov. 2019. [Online]. Available:
  \url{https://www.ericsson.com/en/mobility-report/reports/november-2019}
\BIBentrySTDinterwordspacing

\bibitem{quek_peng_simeone_yu_2017}
T.~{Quek}, M.~{Peng}, O.~{Simone}, and W.~{Yu}, \emph{Cloud Radio Access
  Networks: Principles, Technologies, and Applications}.\hskip 1em plus 0.5em
  minus 0.4em\relax Cambridge University Press, 2017.

\bibitem{6600983}
K.~{Shanmugam}, N.~{Golrezaei}, A.~G. {Dimakis}, A.~F. {Molisch}, and
  G.~{Caire}, ``Femtocaching: Wireless content delivery through distributed
  caching helpers,'' \emph{IEEE Transactions on Information Theory}, vol.~59,
  no.~12, pp. 8402--8413, 2013.

\bibitem{DaiY14}
B.~{Dai} and W.~{Yu}, ``{Sparse Beamforming and User-Centric Clustering for
  Downlink Cloud Radio Access Network},'' \emph{IEEE Access}, vol.~2, pp.
  1326--1339, 2014.

\bibitem{7880686}
J.~{Tang}, W.~P. {Tay}, T.~Q.~S. {Quek}, and B.~{Liang}, ``System cost
  minimization in cloud {RAN} with limited fronthaul capacity,'' \emph{IEEE
  Trans. Wireless Commun.}, vol.~16, no.~5, pp. 3371--3384, May 2017.

\bibitem{6799231}
S.~{Park}, O.~{Simeone}, O.~{Sahin}, and S.~{Shamai}, ``Inter-cluster design of
  precoding and fronthaul compression for cloud radio access networks,''
  \emph{IEEE Wireless Commun. Lett.}, vol.~3, no.~4, pp. 369--372, Aug. 2014.

\bibitem{7841744}
O.~Dhifallah, H.~Dahrouj, T.~Y. Al-Naffouri, and M.~S. Alouini, ``Distributed
  robust power minimization for the downlink of multi-cloud radio access
  networks,'' in \emph{2016 IEEE Global Communications Conference (GLOBECOM)},
  Dec. 2016, pp. 1--6.

\bibitem{7086838}
H.~Dahrouj, T.~Y. Al-Naffouri, and M.~S. Alouini, ``Distributed cloud
  association in downlink multicloud radio access networks,'' in \emph{2015
  49th Annual Conference on Information Sciences and Systems (CISS)}, Mar.
  2015, pp. 1--3.

\bibitem{e22060668}
\BIBentryALTinterwordspacing
S.~Gelincik and G.~Rekaya-Ben~Othman, ``Degrees-of-freedom in multi-cloud based
  sectored cellular networks,'' \emph{Entropy}, vol.~22, no.~6, 2020. [Online].
  Available: \url{https://www.mdpi.com/1099-4300/22/6/668}
\BIBentrySTDinterwordspacing

\bibitem{8732995}
A.~Alameer~Ahmad, H.~Dahrouj, A.~Chaaban, A.~Sezgin, and M.-S. Alouini,
  ``Interference mitigation via rate-splitting and common message decoding in
  cloud radio access networks,'' \emph{IEEE Access}, vol.~7, pp.
  80\,350--80\,365, 2019.

\bibitem{9500732}
K.~Weinberger, A.~A. Ahmad, and A.~Sezgin, ``On synergistic benefits of rate
  splitting in {IRS}-assisted cloud radio access networks,'' in \emph{ICC 2021
  - IEEE International Conference on Communications}, 2021, pp. 1--6.

\bibitem{9445019}
A.~A. Ahmad, Y.~Mao, A.~Sezgin, and B.~Clerckx, ``Rate splitting multiple
  access in {C-RAN}: A scalable and robust design,'' \emph{IEEE Transactions on
  Communications}, vol.~69, no.~9, pp. 5727--5743, 2021.

\bibitem{weinberger2021synergistic}
K.~Weinberger, A.~A. Ahmad, A.~Sezgin, and A.~Zappone, ``Synergistic benefits
  in {IRS}- and {RS}-enabled {C-RAN} with energy-efficient clustering,'' 2021.

\bibitem{8815579}
A.~A. Ahmad, J.~Kakar, H.~Dahrouj, A.~Chaaban, K.~Shen, A.~Sezgin, T.~Y.
  Al-Naffouri, and M.-S. Alouini, ``Rate splitting and common message decoding
  for {MIMO} {C-RAN} systems,'' in \emph{2019 IEEE 20th International Workshop
  on Signal Processing Advances in Wireless Communications (SPAWC)}, 2019, pp.
  1--5.

\bibitem{8291028}
T.~X. {Vu}, S.~{Chatzinotas}, and B.~{Ottersten}, ``Edge-caching wireless
  networks: Performance analysis and optimization,'' \emph{IEEE Transactions on
  Wireless Communications}, vol.~17, no.~4, pp. 2827--2839, 2018.

\bibitem{7880694}
Z.~{Chen}, J.~{Lee}, T.~Q.~S. {Quek}, and M.~{Kountouris}, ``Cooperative
  caching and transmission design in cluster-centric small cell networks,''
  \emph{IEEE Transactions on Wireless Communications}, vol.~16, no.~5, pp.
  3401--3415, 2017.

\bibitem{8374907}
X.~Zhang, T.~Lv, W.~Ni, J.~M. Cioffi, N.~C. Beaulieu, and Y.~J. Guo,
  ``Energy-efficient caching for scalable videos in heterogeneous networks,''
  \emph{IEEE Journal on Selected Areas in Communications}, vol.~36, no.~8, pp.
  1802--1815, 2018.

\bibitem{9137229}
H.~Asmat, F.~Ullah, M.~Zareei, A.~Khan, and E.~M. Mohamed, ``Energy-efficient
  centrally controlled caching contents for information-centric internet of
  things,'' \emph{IEEE Access}, vol.~8, pp. 126\,358--126\,369, 2020.

\bibitem{9291438}
X.~Yang, Y.~Fu, W.~Wen, T.~Q.~S. Quek, and Z.~Fei, ``Mixed-timescale caching
  and beamforming in content recommendation aware fog-{RAN}: A latency
  perspective,'' \emph{IEEE Transactions on Communications}, vol.~69, no.~4,
  pp. 2427--2440, 2021.

\bibitem{7488289}
M.~{Tao}, E.~{Chen}, H.~{Zhou}, and W.~{Yu}, ``Content-centric sparse multicast
  beamforming for cache-enabled cloud {RAN},'' \emph{IEEE Transactions on
  Wireless Communications}, vol.~15, no.~9, pp. 6118--6131, 2016.

\bibitem{Yigit}
\BIBentryALTinterwordspacing
Y.~Ugur, Z.~H. Awan, and A.~Sezgin, ``Cloud radio access networks with coded
  caching,'' \emph{CoRR}, vol. abs/1512.02385, 2015. [Online]. Available:
  \url{http://arxiv.org/abs/1512.02385}
\BIBentrySTDinterwordspacing

\bibitem{8269405}
Z.~{Ye}, C.~{Pan}, H.~{Zhu}, and J.~{Wang}, ``Tradeoff caching strategy of the
  outage probability and fronthaul usage in a cloud-{RAN},'' \emph{IEEE
  Transactions on Vehicular Technology}, vol.~67, no.~7, pp. 6383--6397, 2018.

\bibitem{7558153}
S.~{Park}, O.~{Simeone}, and S.~{Shamai Shitz}, ``Joint optimization of cloud
  and edge processing for fog radio access networks,'' \emph{IEEE Transactions
  on Wireless Communications}, vol.~15, no.~11, pp. 7621--7632, 2016.

\bibitem{DBLP}
W.~{Saad}, M.~{Bennis}, and M.~{Chen}, ``A vision of {6G} wireless systems:
  Applications, trends, technologies, and open research problems,'' \emph{IEEE
  Network}, vol.~34, no.~3, pp. 134--142, 2020.

\bibitem{antonio}
A.~{De Domenico}, Y.~F. {Liu}, and W.~{Yu}, ``Optimal virtual network function
  deployment for {5G} network slicing in a hybrid cloud infrastructure,''
  \emph{IEEE Transactions on Wireless Communications}, vol.~19, no.~12, pp.
  7942--7956, 2020.

\bibitem{3Gpp}
``{NR}; physical layer procedures for data, v15.3.0,'' document 3GPP TSGRAN, TS
  38.214, Sep. 2018.

\bibitem{8403960}
H.~{Zhang}, Y.~{Qiu}, K.~{Long}, G.~K. {Karagiannidis}, X.~{Wang}, and
  A.~{Nallanathan}, ``Resource allocation in {NOMA}-based fog radio access
  networks,'' \emph{IEEE Wireless Communications}, vol.~25, no.~3, pp.
  110--115, 2018.

\bibitem{9069928}
I.~{Randrianantenaina}, M.~{Kaneko}, H.~{Dahrouj}, H.~{ElSawy}, and M.~S.
  {Alouini}, ``Interference management in {NOMA}-based fog-radio access
  networks via scheduling and power allocation,'' \emph{IEEE Transactions on
  Communications}, vol.~68, no.~8, pp. 5056--5071, 2020.

\bibitem{8647539}
N.~{Pontois}, M.~{Kaneko}, T.~H.~L. {Dinh}, and L.~{Boukhatem}, ``User
  pre-scheduling and beamforming with outdated {CSI} in {5G} fog radio access
  networks,'' in \emph{2018 IEEE Global Communications Conference (GLOBECOM)},
  2018, pp. 1--6.

\bibitem{7809154}
L.~{Liu} and W.~{Yu}, ``Cross-layer design for downlink multihop cloud radio
  access networks with network coding,'' \emph{IEEE Trans. Signal Process.},
  vol.~65, no.~7, pp. 1728--1740, Apr. 2017.

\bibitem{8027131}
A.~Douik, H.~Dahrouj, T.~Y. Al-Naffouri, and M.~S. Alouini, ``Distributed
  hybrid scheduling in multi-cloud networks using conflict graphs,'' \emph{IEEE
  Trans. Commun.}, vol.~66, no.~1, pp. 209--224, Jan. 2018.

\bibitem{7032101}
D.~{Liu} and C.~{Yang}, ``Will caching at base station improve energy
  efficiency of downlink transmission?'' in \emph{2014 IEEE Global Conference
  on Signal and Information Processing (GlobalSIP)}, 2014, pp. 173--177.

\bibitem{8187084}
A.~{Zappone} and E.~{Jorswieck}, \emph{Energy Efficiency in Wireless Networks
  via Fractional Programming Theory}, 2015.

\bibitem{8008769}
J.~{Liu}, B.~{Bai}, J.~{Zhang}, and K.~B. {Letaief}, ``Cache placement in
  fog-{RAN}s: From centralized to distributed algorithms,'' \emph{IEEE
  Transactions on Wireless Communications}, vol.~16, no.~11, pp. 7039--7051,
  2017.

\bibitem{6763007}
M.~A. {Maddah-Ali} and U.~{Niesen}, ``Fundamental limits of caching,''
  \emph{IEEE Transactions on Information Theory}, vol.~60, no.~5, pp.
  2856--2867, 2014.

\bibitem{6336688}
T.~{Han}, N.~{Ansari}, M.~{Wu}, and H.~{Yu}, ``On accelerating content delivery
  in mobile networks,'' \emph{IEEE Communications Surveys Tutorials}, vol.~15,
  no.~3, pp. 1314--1333, 2013.

\bibitem{6514675}
Y.~{Cheng}, M.~{Pesavento}, and A.~{Philipp}, ``Joint network optimization and
  downlink beamforming for {CoMP} transmissions using mixed integer conic
  programming,'' \emph{IEEE Transactions on Signal Processing}, vol.~61,
  no.~16, pp. 3972--3987, 2013.

\bibitem{6697043}
L.~{Luo}, N.~{Chakraborty}, and K.~{Sycara}, ``Distributed algorithm design for
  multi-robot generalized task assignment problem,'' in \emph{2013 IEEE/RSJ
  International Conference on Intelligent Robots and Systems}, 2013, pp.
  4765--4771.

\bibitem{7862919}
A.~{Zappone}, E.~{Björnson}, L.~{Sanguinetti}, and E.~{Jorswieck}, ``Globally
  optimal energy-efficient power control and receiver design in wireless
  networks,'' \emph{IEEE Transactions on Signal Processing}, vol.~65, no.~11,
  pp. 2844--2859, 2017.

\bibitem{10.2307/169728}
\BIBentryALTinterwordspacing
B.~R. Marks and G.~P. Wright, ``A general inner approximation algorithm for
  nonconvex mathematical programs,'' \emph{Operations Research}, vol.~26,
  no.~4, pp. 681--683, 1978. [Online]. Available:
  \url{http://www.jstor.org/stable/169728}
\BIBentrySTDinterwordspacing

\bibitem{convexOpt}
S.~{Boyd} and L.~{Vandenberghe}, \emph{Convex Optimization}.\hskip 1em plus
  0.5em minus 0.4em\relax Cambridge University Press, 2004.

\bibitem{8187586}
E.~{Björnson} and E.~{Jorswieck}, \emph{Optimal Resource Allocation in
  Coordinated Multi-Cell Systems}, 2013.

\bibitem{Lobo1998ApplicationsOS}
M.~Lobo, L.~Vandenberghe, S.~P. Boyd, and H.~Lebret, ``Applications of
  second-order cone programming,'' \emph{Linear Algebra and its Applications},
  vol. 284, pp. 193--228, 1998.

\bibitem{7925639}
A.~{Alameer} and A.~{Sezgin}, ``Resource cost balancing with caching in
  {C-RAN},'' in \emph{2017 IEEE Wireless Communications and Networking
  Conference (WCNC)}, 2017, pp. 1--6.

\bibitem{7437385}
B.~{Dai} and W.~{Yu}, ``Energy efficiency of downlink transmission strategies
  for cloud radio access networks,'' \emph{IEEE Journal on Selected Areas in
  Communications}, vol.~34, no.~4, pp. 1037--1050, 2016.

\end{thebibliography}
\balance
\end{document}